\newcommand{\qw}[1][-1]{\ar @{-} [0,#1]}
\newcommand{\qwx}[1][-1]{\ar @{-} [#1,0]}
\newcommand{\gate}[1]{*+<.6em>{#1} \POS ="i","i"+UR;"i"+UL **\dir{-};"i"+DL **\dir{-};"i"+DR **\dir{-};"i"+UR **\dir{-},"i" \qw}
\newcommand{\control}{*!<0em,.025em>-=-<.2em>{\bullet}}
\newcommand{\ctrl}[1]{\control \qwx[#1] \qw}
\newcommand{\targ}{*+<.02em,.02em>{\xy ="i","i"-<.39em,0em>;"i"+<.39em,0em> **\dir{-}, "i"-<0em,.39em>;"i"+<0em,.39em> **\dir{-},"i"*\xycircle<.4em>{} \endxy} \qw}
\newcommand{\qswap}{*=<0em>{\times} \qw}
\newcommand{\Qcircuit}{\xymatrix @*=<0em>}
\letcs\replicate{prg_replicate:nn}
\newcommand*\longsum[1][1]{%
  \mathop{\textnormal{%
    \clipbox{0pt 0pt {.5\width} 0pt}{$\displaystyle\sum$}%
    \replicate{#1}{\clipbox{{.5\width} 0pt {.4\width} 0pt}{$\displaystyle\sum$}}%
    \clipbox{{.6\width} 0pt 0pt 0pt}{$\displaystyle\sum$}}}%
}
\newcommand{\be}{\begin{equation}}
\newcommand{\ee}{\end{equation}}
\newcommand{\ba}{\begin{array}}
\newcommand{\ea}{\end{array}}
\newcommand{\bea}{\begin{eqnarray}}
\newcommand{\eea}{\end{eqnarray}}
\newcommand{\F}{\mathbb{F}}
\newcommand{\ZZ}{\mathbb{Z}}
\newcommand{\RR}{\mathbb{R}}
\newcommand{\Gate}[1]{\textsc{#1}}
\newcommand{\hgate}{\Gate{h}}
\newcommand{\czgate}{\Gate{cz}}
\newcommand{\pgate}{\Gate{p}}
\newcommand{\idgate}{\Gate{i}}
\newcommand{\cnotgate}{\Gate{cnot}}
\newtheorem{lemma}{Lemma}
\newtheorem{corol}{Corollary}
\renewcommand{\sec}[1]{\hyperref[sec:#1]{Section~\ref*{sec:#1}}}
\newcommand{\ssec}[1]{\hyperref[ssec:#1]{Subsection~\ref*{ssec:#1}}}
\newcommand{\appe}[1]{\hyperref[appe:#1]{Appendix~\ref*{appe:#1}}}
\newcommand{\fig}[1]{\hyperref[fig:#1]{Fig.~\ref*{fig:#1}}}
\newcommand{\tab}[1]{\hyperref[tab:#1]{Table~\ref*{tab:#1}}}
\newcommand{\lem}[1]{\hyperref[lem:#1]{Lemma~\ref*{lem:#1}}}
\newcommand{\cor}[1]{\hyperref[cor:#1]{Corollary~\ref*{cor:#1}}}
\newcommand{\thm}[1]{\hyperref[thm:#1]{Theorem~\ref*{thm:#1}}}
\def\Sp{{\mathrm Sp}(2n,\F_2)}
\newcommand{\loc}{\mathsf{loc}}
\newcommand{\C}{{\cal C}}
\newcommand{\CU}{\mathfrak{C}}
\newcommand{\R}{{\cal R}}
\renewcommand{\S}{{\cal S}}
\newcommand{\D}{{\cal D}}
\newcommand{\PL}{{\cal P}}
\newcommand{\laptophottime}{0.0009358} 
\newcommand{\serverhottime}{0.0006274} 
\begin{document}

\title{6-qubit Optimal Clifford Circuits}
\author[1]{Sergey Bravyi}
\author[2]{Joseph A. Latone}
\author[1]{Dmitri Maslov}
\affil[1]{\small{IBM Quantum, IBM T. J. Watson Research Center, Yorktown Heights, NY 10598, USA}}
\affil[2]{\small{IBM Quantum, Almaden Research Center, San Jose, CA 95120, USA}}

\maketitle

\abstract{
Clifford group lies at the core of quantum computation---it underlies quantum error correction, its elements can be used to perform magic state distillation and they form randomized benchmarking protocols, Clifford group is used to study quantum entanglement, and more.  The ability to utilize Clifford group elements in practice relies heavily on the efficiency of their circuit-level implementation.  Finding short circuits is a hard problem; despite Clifford group being finite, its size grows quickly with the number of qubits $n$, limiting known optimal implementations to $n{=}4$ qubits.  For $n{=}6$, the number of Clifford group elements is about $2.1{\cdot}10^{23}$.  In this paper, we report a set of algorithms, along with their C\texttt{++} implementation, that implicitly synthesize optimal circuits for all $6$-qubit Clifford group elements by storing a subset of the latter in a database of size $2.1$TB ($1$KB${=}1024$B).  We demonstrate how to extract arbitrary optimal $6$-qubit Clifford circuit in $\laptophottime$ and $\serverhottime$ seconds using consumer- and enterprise-grade computers (hardware) respectively, while relying on this database.  We use this implementation to establish a new example of quantum advantage by Clifford circuits over $\cnotgate$ gate circuits and find optimal Clifford 2-designs for up to $4$ qubits.
}

\section{Introduction}
Quantum computations are studied for their promise to outperform classical counterparts for certain kinds of computations \cite{nielsen2002quantum}.  Clifford group is an important finite subgroup of the full unitary group, describing the set of quantum computations.  Despite being possible to simulate classically \cite{gottesman1998heisenberg, aaronson2004improved} by a low degree polynomial and having a simple structure \cite{bravyi2020hadamard} (admitting efficient parametrization and being possible to compute by linear depth circuits), the group is most famous for lying at the core of quantum error correction \cite{nielsen2002quantum}, which is believed to be necessary for scalable quantum computation.  Restricted to the study of fault-tolerance, Clifford group plays multiple roles still. To illustrate, all (standard) encoding circuits are Clifford \cite{nielsen2002quantum}, and so are the circuits for state distillation \cite{bravyi2005universal, knill2005quantum}, necessary for fault-tolerant implementation of non-Clifford gates.  Clifford circuits lie at the core of randomized benchmarking protocols \cite{knill2008randomized, magesan2011scalable}.  Other use cases include shadow tomography \cite{aaronson2020shadow, huang2020predicting}, study of entanglement \cite{nielsen2002quantum, bennett1996mixed}, and quantum data hiding \cite{divincenzo2002quantum}.  
It is perhaps fair to regard the Clifford group as one of the most visible and important subgroups of the group of all quantum computations.

Superconducting circuits and trapped ions are two technological frameworks that produced a stream of (universal prototype) programmable quantum computers, publicly available since the year 2016.  Each technology comes in a range of flavors: e.g., superconducting circuits can be based on phase, charge, or flux qubits (or even hybrid kinds), and rely on various qubit coupling mechanisms, and trapped ions can be based on various ion species and rely on different approaches to the two-qubit gates (e.g., stationary vs mobile qubits).  However, no matter the specific flavor, all prototype quantum computers based on these two approaches share one property \cite{IBM, AWS}: the two-qubit gate has notably lower fidelity than a single-qubit gate.  Thus, to the first degree of approximation, the fidelity of an entire quantum computation depends on the number of two-qubit gates it uses.  To make a more subtle point, since the single-qubit gates are most frequently implemented by pulses with real-valued control parameters, the number of two-qubit gates in a circuit upper bounds the number of the single-qubit gates (up to a constant factor), meaning the reduction of the two-qubit gate count likely leads to the reduction in the number of single-qubit gates.  We further note that the $\cnotgate$ gates are available natively (i.e., requiring the minimal number of one two-qubit physical-level interaction) in both superconducting circuits and trapped ions technologies.  Finally, recall that the physical-level entangling pulses frequently take the form of $XX$, $ZX$, and $ZZ$, requiring single-qubit corrections to turn those interactions into commonly used $\cnotgate$ or $\czgate$ gates.  This means that minimizing single-qubit gate count in an abstract circuit may not directly minimize the number of single-qubit physical pulses, since the single-qubit gates will be reshuffled during technology mapping.   This justifies our focus on minimizing the $\cnotgate$ gate count, selected as the optimization criterion in this paper. 

In this paper, we study the problem of optimal synthesis of Clifford circuits.  Since the problem of optimal circuit synthesis is hard, we restrict our attention to a small number of qubits, at most $6$.  The number of Clifford group elements over $6$ qubits, $2.1{\cdot}10^{23}$, is still very large, and we employ a range of techniques to make the search tractable using modern computers.  At the core of our approach is a mechanism to break down the set of Clifford unitaries into a set of classes containing unitaries sharing a similar optimal circuit structure, efficient computation of the canonical representative of each class, and efficient manipulation of class members and the database of canonical representatives. 

The rest of the paper is organized as follows. \sec{def} reports definitions necessary to understand the technical parts. \sec{alg} starts with a subsection containing an overview of our algorithm; all technical details can be found in the following five subsections.  \sec{results} discusses the results, including a summary of relevant statistics (average and maximal circuit sizes, distribution of optimal costs), properties of optimal Clifford circuits that were possible to calculate using the data synthesized (advantage of Clifford circuits over linear reversible circuits, optimal 2-designs), and compares our work to previous similar results.

\section{Definitions} 
\label{sec:def}
We define the $n$-qubit Clifford group $\C_n$ as the group of $2n {\times} 2n$ symplectic matrices $M$ over the two-element field $\F_2$, $\Sp\,{:=}\,\{M{:}\; M^T\Omega_nM \,{=}\, \Omega_n\}$, where $M^T$ denotes transpose matrix, $\Omega_n$ is the matrix $\begin{pmatrix} 0 & I_n \\ I_n & 0 \end{pmatrix}$, and $I_n$ is the $n{\times}n$ identity matrix.  Symplectic matrices are equivalent to and alternatively known as the tableaux \cite{aaronson2004improved}.  The size of the symplectic group is $|\Sp|={2^{n^2}\prod\limits_{j=1}^{n}(2^{2j}-1)}$, which for the purpose of this paper implies $|\C_6|\,{=}\, 208,\!114,\!637,\!736,\!580,\!743,\!168,\!000 \,{\approx}\, 2.1{\cdot}10^{23}$ and assigns the numeric value to the size of the search space we are exploring.

Tableau representation is particularly useful since it allows to define quantum gates and circuits directly without the need to resort to standard definitions in quantum information that employ $2^n{\times}2^n$ unitary matrices \cite{nielsen2002quantum}.  Indeed, 
\begin{itemize}
    \item the Hadamard gate $\hgate$ on qubit $k$ can be defined as the $2n{\times}2n$ identity matrix with swapped columns $k$ and $n{+}k$,
    \item the Phase gate $\pgate$ on qubit $k$ can be defined as the addition of column $k$ to column $n{+}k$ in the $2n{\times}2n$ identity matrix,
    \item the $\cnotgate$ gate with control qubit $k$ and target $j$ performs simultaneous addition of column $k$ to column $j$ and 
    column $n{+}j$ to column $n{+}k$ in the $2n{\times}2n$ identity matrix,
\end{itemize}
\noindent and circuits are matrix multiplications.  The computational completeness of the $\{\hgate,\pgate,\cnotgate\}$ library is readily exposed by the ability to apply Gaussian elimination to obtain arbitrary symplectic matrix as a product of gates.  An additional advantage of such a definition of gates and circuits comes from displaying the capacity to implement transformations by Clifford gates efficiently by a computer program.

As a side note, we highlight that each element of the Clifford group $\C_n$ defines an equivalence class
of $2^n {\times} 2^n$ unitary matrices realizable by the circuits over $\hgate$, $\pgate$, and $\cnotgate$ gates (defined, in turn, via unitary matrices \cite{nielsen2002quantum}). A pair of unitary matrices is considered equivalent if they can be mapped to each other by the left (or right) multiplication with single-qubit Pauli gates and overall phase factors.
Since we focus on the  minimization of the two-qubit gate count, Pauli gates and phase factors can be safely factored out.
Had Pauli gates been included in the Clifford group, the search space size for $n{=}6$ would read $8.5{\cdot}10^{26}$.

\section{Algorithm and its implementation}\label{sec:alg}
\subsection{Overview}
Our approach relies on the use of pruned breadth-first search (BFS) to generate a number of databases containing Clifford unitaries that can be implemented by equal cost optimal circuits, and augment it by a set of tools that extract useful statistics (e.g., distribution of the number of unitaries by entangling gate cost, average cost, largest cost) as well as individual optimal circuits.  BFS is a strategy that relies on taking optimal implementations of cost up to $k$, modifying them by applying cost-$1$ transformations to cost-$k$ elements, and recording the result as a cost $k{+}1$ element if it is not yet found in the database.  BFS is initiated with the identity operator costing zero and ends when all elements in the target set were explored.  While our algorithm can be applied to obtain optimal $2$-, $3$-, $4$-, $5$-, and $6$-qubit Clifford circuits using modern computers, we focus the rest of the description on the most difficult but still amenable to classical computers $6$-qubit case. 

Since the database we are synthesizing contains Clifford unitaries, the first order of business is to choose a suitable data structure to store those.  The data structure must be both compact and allow quick application of gates; this is because BFS boils down to a series of gate applications and memory lookups.  We start with the tableau, which is naturally suited for quick gate application, and modify it to remove two last rows corresponding to $X$ and $Z$ stabilizers each \cite{aaronson2004improved}.  As described in \ssec{datastructure}, these rows can be quickly restored.  However, removing them allows to reduce the storage from $4n^2|_{n=6}=144$ bits to $2\cdot 2n(n{-}1)|_{n=6} = 120$ bits.  Each unitary is thus stored across two $64$-bit machine words (each half corresponding to $X$ and $Z$ parts), with $4$ bits per machine word of (yet) unused space.  While information-theoretic minimum storage requirement, $\lceil\log_2(|\C_6|)\rceil \,{=}\, 78$, implies that more compact storage exists, BFS imposes the requirement of quick gate application and we furthermore rely on canonicity (discussed in next paragraph) to reduce the size of the database; thus, it is not obvious if more efficient storage is possible.

Should each Clifford element require storage, the search would not be possible to execute on modern computers since $|\C_6|\,{\approx}\, 2{\cdot}10^{23}$.  We, therefore, break Clifford group elements into classes of equivalence such that class members share the same optimal circuit structure, a canonical representative exists, and it is efficient to compute.  In our approach, a class of equivalence can be thought of as containing unitaries with optimal circuits equivalent up to left- and right-hand multiplication by single-qubit Clifford unitaries, and qubit relabeling; the canonical representative is chosen to be the one with the least lexicographic order across all elements in its equivalence class.  This means that we can pack up to $|\C_1|^{2n}{\cdot} |S_n|\Big|_{n=6} = 6^{12} \cdot 6! = 1{,}567{,}283{,}281{,}920$ unitaries into one class\footnote{
More precisely, the number of unitaries contained in each equivalence class may vary between
$|\C_1|^n$ and 
$|\C_1|^{2n}{\cdot} |S_n|$. 
The former case is realized for the identity operator which is invariant under all qubit relabelings and does not differentiate between left- and right-hand multiplications by single-qubit Clifford unitaries. The latter case is realized for a generic element of the Clifford group without any special symmetries.}.
Here, $|\C_1|$ is the size of the single-qubit Clifford group $\C_1$ raised to the power $2n$ to represent one-qubit operators on each qubit in the beginning and end of the circuit, and $S_n$ is the permutation group.  However, the computation of canonical representative must be efficient, as otherwise, complexity moves from storage to computation.  We utilized a Pareto-efficient definition of the equivalence class, as determined by $\mathsf{ReduceU}$, the function computing the canonical representative, to be most practical.  Our computationally-defined canonical representative is at most factor $14$ storage inefficient, but it allows a quick computation of the canonical representative, taking on average $0.000003$ seconds (using Intel Core i7-10700K processor).  The computation of $\mathsf{ReduceU}$ turns out to be the runtime-level bottleneck of our implementation since other operations that are applied with a comparable frequency (such as tableau restoration and gate application) are faster.  Further details about $\mathsf{ReduceU}$ may be found in \ssec{reduceu}.

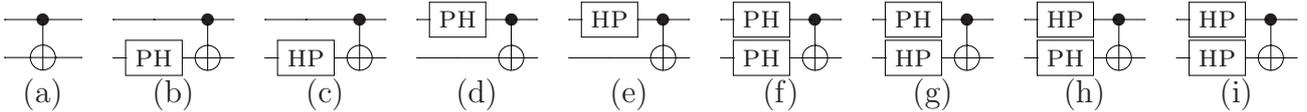
\begin{figure}[t]
\centering
\begin{center}
\begin{tabular}{ccccccccc}
\Qcircuit @C=0.4em @R=0.4em @! {
& \ctrl{1}  & \qw \\
& \targ     & \qw
}
&
\Qcircuit @C=0.4em @R=0.1em @!R {
& \qw                   & \ctrl{1} & \qw \\
& \gate{\pgate\hgate}   & \targ & \qw 
}
&
\Qcircuit @C=0.4em @R=0.1em @!R {
& \qw                   & \ctrl{1} & \qw \\
& \gate{\hgate\pgate}   & \targ & \qw 
}
&
\Qcircuit @C=0.4em @R=0.1em @!R {
& \gate{\pgate\hgate}   & \ctrl{1} & \qw \\
& \qw                   & \targ & \qw 
}
&
\Qcircuit @C=0.4em @R=0.1em @!R {
& \gate{\hgate\pgate}   & \ctrl{1} & \qw \\
& \qw                   & \targ & \qw 
}
&
\Qcircuit @C=0.4em @R=0.1em @!R {
& \gate{\pgate\hgate}   & \ctrl{1} & \qw \\
& \gate{\pgate\hgate}   & \targ & \qw 
}
&
\Qcircuit @C=0.4em @R=0.1em @!R {
& \gate{\pgate\hgate}   & \ctrl{1} & \qw \\
& \gate{\hgate\pgate}   & \targ & \qw 
}
&
\Qcircuit @C=0.4em @R=0.1em @!R {
& \gate{\hgate\pgate}   & \ctrl{1} & \qw \\
& \gate{\pgate\hgate}   & \targ & \qw 
}
&
\Qcircuit @C=0.4em @R=0.1em @!R {
& \gate{\hgate\pgate}   & \ctrl{1} & \qw \\
& \gate{\hgate\pgate}   & \targ & \qw 
}
\\
(a) & (b) & (c) & (d) & (e) & (f) & (g) & (h) & (i)
\end{tabular}
\end{center}
\begin{minipage}[c]{0.85\linewidth}
\caption{$\cnotgate$ gate equivalent entangling transformations that need to be applied to each of $\frac{n(n{-}1)}{2}$ pairs of qubits of a Clifford group element implementable with $k$ entangling gates to explore the possibility of expanding it into a Clifford group element requiring $k{+}1$ gates.  It suffices to apply these gates to a pair of qubits in an arbitrary fixed order, since the application of a gate in the other order is enabled by some other gate among those listed.  For instance, the $\cnotgate$ with flipped controls with respect to (a) is accomplished by (h), noting that the single-qubit gates on the right side do not matter due to the choice to work with equivalence classes.}
\label{fig:135}
\end{minipage}
\end{figure}

The restriction to equivalence classes helps not only to dramatically reduce the storage requirement, but also to minimize the number of $\cnotgate$-equivalent transformations that we need to apply to a Clifford unitary requiring $k$ gates to explore Clifford unitaries requiring $k{+}1$ entangling gates.  Specifically, the number of transformations is only  $9\frac{n(n{-}1)}{2}\big|_{n=6} = 135$, as illustrated in \fig{135}.

The $15$-part (one part per a fixed gate count ranging from $1$ to $15$, with $15$ turning out to be the maximum) sorted database with canonical representatives of equal cost is $2.1$TB in size, and it took roughly $6$ months to synthesize it on a small cluster of Intel\textsuperscript{\textregistered} server-class machines.  Since we made software updates as the search progressed, and improved the performance in doing so, we believe it may take about $2$ months to rerun it from scratch.  We store the database on an SSD ($2+$TB RAM was expensive at the time of this writing).  Given the database, an optimal circuit for a given $6$-qubit Clifford unitary $U$ may be found as follows: compute $\mathsf{ReduceU}(U)$, find it in part of the database containing size $k$ unitaries, apply each of $9\frac{n(n{-}1)}{2}$ gates, compute the resulting canonical element and look it up in the size $k{-}1$ database; once found repeat for $k\,{:=}\,k{-}1$ until $k{=}0$.  Our implementation of the above algorithm takes an average of $0.1$ seconds to extract an optimal circuit.  The bottleneck is the database search on the SSD, since the average number of times an element needs to be searched is at most $\frac{135}{2} = 67.5$, the databases for large $k$ are large, and search needs to make multiple queries that add up quickly given SSD's limited access time.  Instead, recall that $4{+}4\,{=}\,8$ bits of the original data structure are unused, and note that $8$ bits suffice to store the gate information, since $\lceil\log_2(135)\rceil \,{=}\, 8$.  We thus augment the database by loading these $8$ bits with the last gate information, allowing to select the correct gate right away during the circuit restoration.  This modification reduces the runtime by roughly a factor of $67.5$.  We further optimize the performance by storing an index with each $1024^{\text th}$ element of the database in RAM.  This allows finding an optimal circuit implementation of an arbitrary $6$-qubit Clifford unitary in as little as $\laptophottime$ seconds on a MacBook Pro\textsuperscript{\textregistered} (2.3 GHz Quad-Core Intel\textsuperscript{\textregistered} Core i7-1068NG7 CPU, 16GB RAM) with a USB-C attached SSD (4TB VectoTech Rapid\textsuperscript{\textregistered} 540MB/s 3D NAND Flash), and $\serverhottime$ seconds on a high-performance server (Quad Intel\textsuperscript{\textregistered} Xeon E7-4850 v4 16-Core/2.1GHz, 6TB RAM).  These performance figures were established by averaging out the time to synthesize optimal circuits for $10{,}000$ random uniformly distributed Clifford unitaries while relying on kernel-owned memory to cache files with the use of \textit{mmap} and using a supplementary index for the laptop version of the search. 

In the following subsections we report further details of our implementation.

\subsection{Database generation}
\label{ssec:database}

Let $\C_n^k\,{\subseteq}\, \C_n$ be the set of all Clifford group elements with the $\cnotgate$ cost $k$. Here $k=0,1,\ldots,k_{max}(n)$ for some a-priori unknown maximum cost $k_{max}(n)$. For example, $\C_n^0$ is the local subgroup of $\C_n$, i.e., one generated by the single-qubit Clifford gates.  Suppose $\mathsf{ReduceU}{:}\; \C_n\to \C_n$ is a function such that $\mathsf{ReduceU}(U)\,{=}\,\mathsf{ReduceU}(V)$ if and only if $U$ and $V$ are equivalent up to left and right multiplications by single-qubit gates and a qubit relabeling.  In other words, $\mathsf{ReduceU}(U)$ is a canonical representative of the equivalence class 
\be
\label{eq_class}
[U]:=\{KW^{-1}UWL\,{:} \; K,L\in \C_n^0,\; W\in S_n\}.
\ee
Here and below $S_n\,{\subseteq}\, \C_n$ is the subgroup of qubit permutations.  A specific implementation of the function $\mathsf{ReduceU}$, which we defer to \ssec{reduceu}, does not matter at this point.  Let $\R_n^k$ be the set of all reduced cost-$k$  Clifford group elements,
\[
\R_n^k := \{ \mathsf{ReduceU}(U)\,{:} \; U\in \C_n^k\}.
\]
Our database consists of $k_{max}(n){+}1$ parts, such that the $k$-th part contains all elements of $\R_n^k$.  The elements are furthermore stored in the lexicographic order to enable binary search. 

Let $I\,{\in}\,\C_n$ be the identity matrix and $\cnotgate_{i,j}$ be the $\cnotgate$ gate with the control qubit $i$ and the target qubit $j$.  Since any cost-$0$ and cost-$1$ element is equivalent to $I$ and $\cnotgate_{1,2}$ respectively, we have
\[
\R_n^0= \{\mathsf{ReduceU}(I)\} \mbox{ \;and\; } \R_n^1=\{ \mathsf{ReduceU}(\cnotgate_{1,2})\}.
\]

Suppose we have the sets $\R_n^0, \R_n^1,\ldots, \R_n^{k-1}$ for some $k\,{\ge}\,2$ (initially $k{=}2$).  The rest of this section explains how to compute $\R_n^k$.  First, we need to choose a set of cost-$1$ generators that obey certain technical conditions. Let $m=9n(n{-}1)/2$ and $G_1,G_2,\ldots,G_m \in \C_n^1$ be the generators shown in \fig{135}.  By definition, each generator has the form $\Gate{a}_i \Gate{b}_j \cnotgate_{i,j}$ for some pair of qubits $i{<}j$ and $\Gate{a},\Gate{b}\in \{\idgate,\pgate\hgate,\hgate\pgate \}$.  We will use the following properties of the generator set.
\begin{lemma}
\label{lem:gen1}
Any cost-$k$ element $U\,{\in}\, \C_n^k$ can be written as $U=G_{a_1}G_{a_2}\cdots G_{a_k}L$ for some $L\,{\in}\, \C_n^0$ and some $a_1,a_2,\ldots,a_k\in \{1,2,\ldots,m\}$.
\end{lemma}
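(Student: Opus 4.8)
The plan is to induct on $k$, stripping one generator $G_{a_1}$ off the left-hand side of an optimal circuit for $U$ at each step. The base case $k=0$ is immediate: $\C_n^0$ is by definition the local subgroup, so $U=L$ with $L\in\C_n^0$ and an empty product of generators. The whole argument rests on one purely two-qubit statement, which I would isolate first: for qubits $i<j$, for any single-qubit gates $p$ on $i$ and $q$ on $j$, and for a $\cnotgate$ on $\{i,j\}$ in either orientation, the product of $p$, $q$ and that $\cnotgate$ equals $\Gate{a}_i\Gate{b}_j\,\cnotgate_{i,j}\,L$ for some $\Gate{a},\Gate{b}\in\{\idgate,\pgate\hgate,\hgate\pgate\}$ and $L\in\C_n^0$. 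In words: up to local gates applied \emph{after} it, the nine operators of \fig{135} exhaust all ways of dressing a $\cnotgate$ on a fixed pair of qubits by arbitrary single-qubit gates on those two qubits.

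For the two-qubit statement, take first the oriented case $p_i q_j\,\cnotgate_{i,j}$ (control $i$, target $j$) and recall the propagation rules $X_i\mapsto X_iX_j$, $Z_i\mapsto Z_i$, $X_j\mapsto X_j$, $Z_j\mapsto Z_iZ_j$. For a single-qubit gate $t$ on the control, $\cnotgate_{i,j}^{-1}t_i\cnotgate_{i,j}$ is again a product of single-qubit gates precisely when $t$ stabilizes the line spanned by $Z_i$, i.e.\ for the order-two subgroup $T=\{\idgate,\pgate\}\le\C_1$; by the mirror computation on the target the relevant subgroup is the stabilizer of the line spanned by $X_j$, namely $T'=\{\idgate,\hgate\pgate\hgate\}$. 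A direct check inside the six-element group $\C_1$ shows that $\{\idgate,\pgate\hgate,\hgate\pgate\}$ is a complete set of left-coset representatives for both $T$ and $T'$ (for $T$, the three cosets are $\{\idgate,\pgate\}$, $\{\hgate,\hgate\pgate\}$, $\{\pgate\hgate,\pgate\hgate\pgate\}$; similarly for $T'$). Writing $p=\Gate{a}t$, $q=\Gate{b}t'$ with $\Gate{a},\Gate{b}$ in this transversal and $t\in T$, $t'\in T'$, we get $p_i q_j\,\cnotgate_{i,j}=\Gate{a}_i\Gate{b}_j\,\cnotgate_{i,j}\,\bigl(\cnotgate_{i,j}^{-1}t_i t'_j\,\cnotgate_{i,j}\bigr)$, and the bracketed factor, being a product of two local operators, lies in $\C_n^0$. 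The flipped orientation $p_i q_j\,\cnotgate_{j,i}$ reduces to this one via $\cnotgate_{j,i}=\hgate_i\hgate_j\,\cnotgate_{i,j}\,\hgate_i\hgate_j$: rewrite it as $(p\hgate)_i(q\hgate)_j\,\cnotgate_{i,j}\,\hgate_i\hgate_j$, apply the oriented case, and absorb the trailing $\hgate_i\hgate_j$ into $L$. This is the content of the figure caption's remark that a $\cnotgate$ with flipped control/target is already realized by one of the nine gates.

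For the inductive step, fix $U\in\C_n^k$ with $k\ge1$ and choose an implementing circuit over $\{\hgate,\pgate,\cnotgate\}$ with exactly $k$ $\cnotgate$ gates. Let $C$ be its first $\cnotgate$ gate, acting on some pair $\{i,j\}$; everything before $C$ is a layer $L_0\in\C_n^0$ of single-qubit gates, and everything after $C$ implements some $U'\in\C_n$ with a circuit of $k-1$ $\cnotgate$ gates. Split $L_0=p_i q_j\tilde L_0$ with $p,q\in\C_1$ and $\tilde L_0$ supported away from $\{i,j\}$; since $\tilde L_0$ commutes with $C$, we have $U=(p_i q_j C)\,\tilde L_0 U'$. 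The two-qubit statement rewrites $p_i q_j C = G_{a_1}L'$ with $G_{a_1}$ one of the generators of \fig{135} and $L'\in\C_n^0$, so $U=G_{a_1}\,(L'\tilde L_0 U')$. Put $V:=L'\tilde L_0 U'$. Because left or right multiplication by a $\C_n^0$ element never changes the $\cnotgate$ count, $V$ has cost exactly $k-1$: at most $k-1$ since $U'$ does, and at least $k-1$ since $U=G_{a_1}V$ has cost $k$. By the inductive hypothesis $V=G_{a_2}\cdots G_{a_k}L$, whence $U=G_{a_1}G_{a_2}\cdots G_{a_k}L$.

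The only genuine content is the two-qubit statement — it pins down exactly why nine single-qubit prefactors per qubit pair suffice; the rest is bookkeeping about cost and about commuting local gates past a $\cnotgate$. That statement reduces to a finite verification inside $\C_1$, so I do not expect any real obstacle; the main care needed is in handling the flipped-control orientation and in confirming the coset structure of $T$ and $T'$.
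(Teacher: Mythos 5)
Your proof is correct, and its skeleton is the same as the paper's: induct on $k$, peel one $\cnotgate$ together with its adjacent single-qubit dressing off an optimal circuit, reduce the flipped orientation via $\cnotgate_{j,i}=\hgate_i\hgate_j\cnotgate_{i,j}\hgate_i\hgate_j$, and absorb residual local gates. The one place where you genuinely diverge is the two-qubit core fact. The paper proves it by ``borrowing'' the missing single-qubit gates from the remaining circuit $V$ and pushing them through the $\cnotgate$ with an explicit list of six commutation identities; you instead observe that $\{\idgate,\pgate\}$ and $\{\idgate,\hgate\pgate\hgate\}$ are exactly the order-two subgroups of $\C_1$ that commute through the $\cnotgate$ on the control and target respectively, and that $\{\idgate,\pgate\hgate,\hgate\pgate\}$ is a common left transversal for both, so every dressing factors as (transversal element)$\times$(commuting element). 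Your version is cleaner in that it explains \emph{why} nine generators suffice (three cosets per qubit) rather than verifying it case by case, and it never needs to modify the decomposition of $V$; the paper's version is more self-contained at the level of concrete gate identities, which is what an implementation would actually use. (A cosmetic point: your remark that the bracketed factor ``is a product of two local operators'' is justified only because $t_i$ and $t'_j$ commute with the $\cnotgate$, so the conjugate literally equals $t_it'_j$ --- conjugation of a local operator by a $\cnotgate$ is not local in general, but you have established exactly the needed case.)
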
 
The proof is deferred to \appe{A}.  This lemma has the following simple corollaries.
\begin{corol}
\label{cor:corol2}
Suppose $W\,{\in}\, S_n$ is a qubit permutation and $L\,{\in}\, \C_n^0$. For any generator $G_a$ there exist a generator $G_b$ and $M\,{\in}\, \C_n^0$ such that $WLG_a=G_b WM$.
\end{corol}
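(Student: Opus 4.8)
The plan is to reduce to the case of a trivial permutation, which is already covered by \lem{gen1} applied with $k=1$. The key manipulation is to conjugate $WLG_a$ by $W$, moving the permutation to the right, so that what remains is a genuine cost-$1$ Clifford element to which \lem{gen1} applies directly.

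Concretely, I would first write $WLG_a = (WLG_aW^{-1})\,W$ and argue that $WLG_aW^{-1}\in\C_n^1$. This element factors as $(WLW^{-1})(WG_aW^{-1})$. Conjugating a product of single-qubit gates by a qubit permutation only relabels which qubit each gate acts on, so $WLW^{-1}\in\C_n^0$. Since every generator has the form $G_a=\Gate{a}_i\Gate{b}_j\cnotgate_{i,j}$, we get $WG_aW^{-1}=\Gate{a}_{W(i)}\Gate{b}_{W(j)}\cnotgate_{W(i),W(j)}$, a single $\cnotgate$ dressed by single-qubit gates, hence of $\cnotgate$ cost exactly $1$. By subadditivity of the $\cnotgate$ cost, $\mathrm{cost}(WLG_aW^{-1})\le 1$, and it is not $0$: otherwise $\cnotgate_{W(i),W(j)}$ would be a product of single-qubit gates (equivalently, block-diagonal in the tableau picture), which it is not. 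Hence $WLG_aW^{-1}\in\C_n^1$.

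Next I would invoke \lem{gen1} with $k=1$: there exist a generator $G_b$ and some $L'\in\C_n^0$ with $WLG_aW^{-1}=G_bL'$. Substituting back,
\[
WLG_a \;=\; G_b L' W \;=\; G_b W\,(W^{-1}L'W) \;=\; G_b W M,\qquad M:=W^{-1}L'W,
\]
and $M\in\C_n^0$ for the same reason as before, being a permutation-conjugate of a product of single-qubit gates. This is precisely the claimed identity.

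The only point requiring care is the middle step: one must confirm that the conjugated element $WLG_aW^{-1}$ has cost \emph{exactly} $1$ and not $0$, so that \lem{gen1} yields a single generator $G_b$ rather than an empty word; everything else is routine bookkeeping using the facts that $S_n$ normalizes $\C_n^0$ and that the $\cnotgate$ cost is subadditive and invariant under conjugation by qubit permutations.
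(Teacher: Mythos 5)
Your proof is correct and follows essentially the same route as the paper: conjugate $WLG_a$ by $W$ to obtain a cost-$1$ element, apply \lem{gen1} with $k=1$ to write it as $G_bL'$, and push $W$ back through, using that $S_n$ normalizes $\C_n^0$. The extra care you take in verifying that $WLG_aW^{-1}$ has cost exactly $1$ (rather than $0$) is a detail the paper compresses into the phrase ``equivalent to a cost-$1$ element,'' but your argument for it is sound.
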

\begin{proof}
Let $U\,{=}\,WLG_a W^{-1}$.  Note that $U\,{\in}\, \C_n^1$ since $U$ is equivalent to a cost-$1$ element $G_a$.  \lem{gen1} with $k{=}1$ implies that $U\,{=}\,G_b M'$ for some generator $G_b$ and some $M'\,{\in}\, \C_n^0$.  Thus $WLG_a=G_b M' W=G_bW M$, where $M=W^{-1} M' W\in \C_n^0$. 
\end{proof}
\begin{corol}
\label{cor:corol1}
For any generator $G_a$ and $L\,{\in}\, \C_n^0$ there exists a generator $G_b$ such that $G_a L G_b \,{\in}\, \C_n^0$.
\end{corol}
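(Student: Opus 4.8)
The plan is to deduce this from \lem{gen1} applied to the \emph{inverse} of the cost-$1$ element $G_aL$, rather than to $G_aL$ itself. The point is that the desired conclusion $G_aLG_b\in\C_n^0$ is equivalent to saying that $L^{-1}G_a^{-1}$ admits a decomposition $L^{-1}G_a^{-1}=G_bM$ with $M\in\C_n^0$: given such a decomposition we may set $G_b:=L^{-1}G_a^{-1}M^{-1}$, and then $G_aLG_b=M^{-1}\in\C_n^0$. So it suffices to show $L^{-1}G_a^{-1}\in\C_n^1$ and then invoke \lem{gen1} with $k=1$, which provides exactly a decomposition of a cost-$1$ element as a generator followed by a local gate.

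The steps I would carry out, in order, are the following. First, observe that $G_aL$ has $\cnotgate$-cost exactly $1$: it is at most $1$ since $G_a\in\C_n^1$ and $L\in\C_n^0$, and it is at least $1$ since otherwise $G_a=(G_aL)L^{-1}$ would lie in $\C_n^0$, contradicting $G_a\in\C_n^1$. Second, note that the $\cnotgate$-cost of any Clifford element equals that of its inverse, because $\hgate$, $\pgate$, and $\cnotgate$ are all involutions as symplectic matrices, so reversing the order of the gates in a circuit for $U$ yields a circuit for $U^{-1}$ with the same number of $\cnotgate$ gates; hence $L^{-1}G_a^{-1}=(G_aL)^{-1}\in\C_n^1$. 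Third, apply \lem{gen1} with $k=1$ to get a generator $G_b$ and $M\in\C_n^0$ with $L^{-1}G_a^{-1}=G_bM$. Right-multiplying by $M^{-1}$ gives $G_b=L^{-1}G_a^{-1}M^{-1}$, so
\[
G_aLG_b=G_aL\,(L^{-1}G_a^{-1}M^{-1})=M^{-1}\in\C_n^0,
\]
since $\C_n^0$ is a subgroup of $\C_n$; this $G_b$ is the required generator.

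There is essentially no hard step here. The only point that deserves a word of care is the invariance of $\cnotgate$-cost under inversion, and that is immediate once one recalls that $\hgate$ and $\pgate$ are their own inverses at the level of symplectic matrices (Pauli corrections, which would otherwise appear, are irrelevant because $\C_n$ is defined here as a group of symplectic matrices). The one genuinely useful idea is the decision to feed $(G_aL)^{-1}$ rather than $G_aL$ into \lem{gen1}: that is precisely what converts the available ``generator-on-the-left, local-on-the-right'' decomposition of a cost-$1$ element into the ``local-on-the-left, generator-on-the-right'' shape demanded by the statement.
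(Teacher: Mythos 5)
Your proof is correct and follows essentially the same route as the paper's: apply \lem{gen1} with $k=1$ to $(G_aL)^{-1}$, obtain $(G_aL)^{-1}=G_bM$, and conclude $G_aLG_b=M^{-1}\in\C_n^0$. You simply spell out two steps the paper leaves implicit (that $G_aL$ has cost exactly $1$, and why the cost is invariant under inversion), which is fine.
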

\begin{proof}
Let $U\,{=}\,(G_aL)^{-1}$. Note that $U\,{\in}\, \C_n^1$ since the cost is invariant under taking the inverse. \lem{gen1} with $k{=}1$ implies that $U\,{=}\,G_b M$ for some generator $G_b$ and $M\,{\in}\, \C_n^0$. Thus $G_a L G_b=M^{-1}\in \C_n^0$.
\end{proof}
We claim that the following algorithm outputs the set $S\,{=}\,\R_n^k$.

\begin{center}
\fbox{\parbox{7cm}{
\begin{algorithmic}
\State $S\gets \emptyset$
\For{$V\in \R_n^{k-1}$}
\For{$b\in \{1,2,....,m\}$}
\State $U\gets \mathsf{ReduceU}(VG_b)$
\If{$U\notin \R_n^{k-2} \cup \R_n^{k-1}$}
\State $S \gets S \cup \{U\}$.
\EndIf 
\EndFor
\EndFor
\end{algorithmic}}}
\end{center}

Let us first check that $\R_n^k\,{\subseteq}\, S$.  Consider any element $U\,{\in}\, \R_n^k$.  Then $U\,{=}\,\mathsf{ReduceU}(\tilde{U})$ for some $\tilde{U}\,{\in}\, \C_n^k$.  By \lem{gen1}, we can write $\tilde{U}\,{=}\,G_{a_1}G_{a_2}\cdots G_{a_k}M$ for some $M\,{\in}\, \C_n^0$.  Define
\[
\tilde{V}:=G_{a_1}G_{a_2}\cdots G_{a_{k-1}} \mbox{ \;and\; } V:= \mathsf{ReduceU}(\tilde{V}).
\]
Note that $\tilde{V}\,{\in}\, \C_n^{k-1}$ (if $\tilde{V}\in \C_n^\ell$ for some $\ell\,{<}\,k{-}1$ then $\tilde{U}\,{=}\,\tilde{V}G_{a_k}M$ would have cost less than $k$).  Accordingly, $V\,{\in}\, \R_n^{k-1}$.  By definition of the function $\mathsf{ReduceU}$, we have $\tilde{V}=KW^{-1} V WL$ for some $K,L\in \C_n^0$ and some qubit relabeling $W\,{\in}\, S_n$. Thus 
\[
\tilde{U}=G_{a_1}G_{a_2}\cdots G_{a_k}M=\tilde{V}G_{a_k}M=KW^{-1} V WLG_{a_k}M.
\]
Commuting $G_{a_k}$ through $WL$ next to $V$ using \cor{corol2} we obtain  $\tilde{U}=KW^{-1} (VG_b) WM'$ for some generator $G_b$ and some $M'\,{\in}\, \C_n^0$. This shows that $\tilde{U}$ is equivalent to $VG_n$ and thus $\mathsf{Reduce}(VG_b)=\mathsf{Reduce}(\tilde{U})=U$ for some $V\in \R_n^{k-1}$ and some generator $G_b$. Thus $U\,{\in}\, S$.  We have proved that $\R_n^k\,{\subseteq}\, S$.

Conversely, suppose $U\,{\in}\, S$.  Then $U$ is a reduced element obtained from some cost-$(k{-}1)$ element $V$ by adding a single generator, relabeling the qubits, and left/right multiplications by the single-qubit gates.  Since adding a single generator can change the cost by at most one\footnote{The cost cannot grow by more than 1 for an obvious reason.  It cannot decline by $d{>}1$ since this would imply that $V$ can be implemented with cost $(k{-}1{-}d)+1 = k-d < k{-}1$ as the circuit $(Vg).g^{-1}$, where $g$ is the generator, which contradicts the notion that $V$ is a cost-$(k{-}1)$ element.}, we conclude that $U\in \R_n^{k-2} \cup \R_n^{k-1}\cup \R_n^k$.  Thus the algorithm adds $U$ to $S$ only if $U\,{\in}\, \R_n^k$.  We have proved that $S\,{\subseteq}\, \R_n^k$.

By sorting the elements of each set $\R_n^\ell$ and using the binary search to check set membership, the above algorithm requires $\tilde{O}(|\R_n^{k-1}|m)$  calls to the function $\mathsf{ReduceU}$, where the $\tilde{O}$ notation hides factors logarithmic in the size of $\R_n^{k-2}$, $\R_n^{k-1}$, and $\R_n^k$.  The database generation terminates as soon as $\R_n^k\,{=}\,\emptyset$.  This determines the maximum cost $k_{max}(n)$ as $k{-}1$.

As discussed in \sec{alg}, the generation of the $6$-qubit database spans a few CPU months and involves manipulations with terabytes of data.  How can we be confident that this computation is error-free?  Our correctness tests included the verification that the size of the Clifford group inferred from the database agrees with the analytic formula $|\C_n|={2^{n^2}\prod_{j=1}^{n}(4^j{-}1)}$.  In more detail, the number of cost-$k$ Clifford group elements can be inferred from the identity
\be
\label{count_test1}
|\C_n^k| = \sum_{U\in \R_n^k} |[U]|,
\ee
where $|[U]|$ is the size of the equivalence class $[U]$ that contains $U$, see Eq.~(\ref{eq_class}). 
Furthermore, 
\be
\label{count_test2}
|[U]| =  \frac{|\C_n^0|^2 \cdot |S_n|}{|\mathrm{Aut}(U)|} =  \frac{6^{2n} n!}{|\mathrm{Aut}(U)|},
\ee
where $\mathrm{Aut}(U)$ is the automorphism group of $U$ that consists of all triples $K{\times} L{\times} W\in \C_n^0{\times} \C_n^0{\times} S_n$ such that  $U=KW^{-1}UWL$. We have checked that the counts $|\C_n^k|$ inferred from Eqs.~(\ref{count_test1},\ref{count_test2}) indeed obey $\sum_{k=0}^{k_{max}(n)} |\C_n^k|=|\C_n|$. Thus our database passed the self-consistency test.  \tab{2345} and \tab{6qdistribution} displaying the counts $|\R_n^k|$ and $|\C_n^k|$ can be found in \sec{results}.

In order to speed up the synthesis of optimal circuits, we augmented each database entry $U\,{\in}\, \R_n^k$  with $8$ auxiliary bits specifying a generator $G_b$ that reduces the cost of $U$ by one, such that $UG_b\,{\in}\, \C_n^{k-1}$.  Here we assume $k{\ge} 1$.  Let us prove that such cost-reducing generator $G_b$ exists for any $U\,{\in}\, \R_n^k$.  Indeed, use \lem{gen1} to write $U=G_{a_1} G_{a_2} \cdots G_{a_k} L$ for some $L\,{\in}\, \C_n^0$. By \cor{corol1}, there exists a generator $G_b$ such that $F\equiv G_{a_k}L G_b\in \C_n^0$.  Now $UG_b=G_{a_1} G_{a_2} \cdots G_{a_{k-1}}F$ for some $F\,{\in}\, \C_n^0$, that is, $UG_b$ has cost $k{-}1$. 

To augment a given element $U$ of the cost-$k$ database $\R_n^k$ we find the first  cost-reducing generator $b\,{\in}\, \{1,2,\ldots,m\}$ such that $\mathsf{ReduceU}(UG_b)\,{\in}\, \R_n^{k-1}$.  This requires at most $m$ calls to $\mathsf{ReduceU}$ and binary searches in $\R_n^{k-1}$ (computing the group multiplication takes a negligible time).  Once a cost-reducing generator $G_b$ is found, its index $b$ is recorded in the database using the unused bits of $U$. The augmentation step is applied to all $U\,{\in}\, \R_n^k$ and for all $k\,{=}\,1,2,\ldots,k_{max}(n)$.

\subsection{Synthesis of optimal circuits}
\label{ssec:synthesis}

The optimal compiler takes as input an element of the Clifford group $U\,{\in}\, \C_n$ and outputs a Clifford circuit (a list of the primitive gates $\hgate$, $\pgate$, and $\cnotgate$) implementing $U$ with the smallest possible $\cnotgate$ gate count, equal to the cost of $U$.  The cost can be computed by making a single call to $\mathsf{ReduceU}$ and performing at most $k_{max}(n)$ database searches.  Below we assume that the database is augmented with the cost-reducing generators, as discussed in \ssec{database}.  Thus the database search returns the cost $k$ element $V$ such that $V\,{\equiv}\, \mathsf{Reduce}(U)\in \R_n^k$ and a cost-reducing generator $G_a$ such that $VG_a\,{\in}\, \C_n^{k-1}$.  The next step is to convert $G_a$ into a cost-reducing generator for $U$.  To this end, write $V\,{=}\,K W^{-1} U W L$ for some $K,L\in \C_n^0$ and some qubit permutation $W$.  The group elements $K$, $L$, and $W$ that transform $U$ into the reduced form are readily available by adding appropriate bookkeeping steps to the implementation of $\mathsf{ReduceU}$ described in \ssec{reduceu}. At this point we have 
\[
KW^{-1} U W L G_a \in \C_n^{k-1}.
\]
Commute $G_a$ through $WL$ next to $U$ using \cor{corol2}.  This gives
\[
KW^{-1} U G_b W M \in \C_n^{k-1}
\]
for some generator $G_b$ and some $M\,{\in}\, \C_n^0$.  The generator $G_b$ can be computed in time $O(1)$ using the standard commutation rules of the Clifford group. Thus $UG_b \,{\in}\, \C_n^{k-1}$, that is, $G_b$ is a cost-reducing generator for $U$.  Replacing $U$ by $UG_b$ and applying the above step recursively, one constructs a $k$-tuple of generators such that $M=UG_{a_1} G_{a_2} \cdots G_{a_k} \in \C_n^0$ is a product of single-qubit gates.  This gives $U^{-1} = G_{a_1} G_{a_2} \cdots G_{a_k}M^{-1}$.  Decomposing each generator and $M^{-1}$ into a product of primitive gates $\hgate$, $\pgate$, and $\cnotgate$  gives an optimal circuit implementing $U^{-1}$. Since all primitive gates are self-inverse, an optimal circuit implementing $U$ is obtained simply by reversing the order of gates.  If needed, the number of single-qubit gates in the compiled circuit can be optimized by commuting single-qubit gates to the last time step (whenever possible) and merging them using optimal lookup of $\C_1$ elements.

\subsection{Computation of ReduceU}\label{ssec:reduceu}

In this section we introduce reduced forms of Clifford group elements and give algorithms for computing these forms.  A given matrix $U\,{\in}\,\C_n$ is transformed into a reduced form by applying a sequence of elementary reductions from the following list:
\begin{enumerate}
\item Multiplication of $U$ on the left by single-qubit Clifford gates.
\item Multiplication of $U$ on the right by single-qubit Clifford gates.
\item Relabeling of qubits.
\end{enumerate}
Depending on which type of reductions is considered, there are three different reduced forms: a left-reduced form (reductions of type 1 only), a locally reduced form (reductions of types 1 and 2), and a fully reduced form (reductions of types 1, 2, and 3).  Each form comes with an algorithm specifying the sequence of reductions to be applied.  We define the reduced forms inductively starting from the left-reduced form.  The function $\mathsf{ReduceU}$ used in \ssec{database} and \ssec{synthesis} computes the fully reduced form.

We begin by defining convenient notations.  Let $e^1,e^2,\ldots,e^{2n}\in \F_2^{2n}$ be the standard basis of $\F_2^{2n}$: the basis vector $e^j$ has a single non-zero at the $j$-th position.  We consider $e^j$ as column vectors.  Let $e_j:= (e^j)^T$ be the corresponding row vector.  For example, if $n{=}1$ then 
\[
e^1=\left[ \ba{c} 1 \\ 0 \ea \right],
\quad
e^2=\left[ \ba{c} 0 \\ 1 \ea \right],
\quad
e_1=\left[ \ba{cc} 1 & 0 \\ \ea \right],
\mbox{ \;and\; }
e_2=\left[ \ba{cc} 0 & 1 \\ \ea \right].
\]
We write $u\,{\oplus}\, v$ to denote the addition of binary vectors $u$ and $v$ modulo $2$.  Elements of the Clifford group $U\,{\in}\,\C_n$ are treated as binary symplectic matrices of the size $2n{\times} 2n$, see \sec{def}.  A matrix $U$ has the $j$-th column and the $j$-th row $Ue^j$ and $e_j U$, respectively.

Recall that $\C_n^0\,{\subseteq}\, \C_n$ is the local subgroup generated by the single-qubit gates ($\hgate$ and $\pgate$).  Define a subgroup $\C_{n,j}\,{\subseteq}\, \C_n^0$ generated by the single-qubit gates acting on the $j$-th qubit, where $j=1,2,\ldots,n$.  Equivalently, $U\,{\in}\, \C_{n,j}$ iff $Ue^i=e^i$ for all $i\,{\notin}\, \{j,n+j\}$, whereas $Ue^j = ae^j \oplus b e^{n+j}$ and $Ue^{n+j} = c e^j \oplus d e^{n+j}$ for some coefficients $a,b,c,d\in \F_2$ such that 
\[
\left[ \ba{cc}
a & c \\
b & d \\
\ea \right] \in \mathrm{GL}(2,\F_2).
\]
Note that the subgroups $\C_{n,j}$ pairwise commute. 

A matrix $U\in \C_n$ is said to be {\em left-reduced} if
\begin{equation}
\label{left_reduced}
e_j U < e_{n+j} U < (e_j \oplus e_{n+j})U \mbox{ for all $j=1,2,\ldots,n$}.
\end{equation}
Here  and below the bit strings are compared using the lexicographic order (i.e., $00\,{<}\,01\,{<}\,10\,{<}\,11$ in the case $n{=}1$). The following lemma shows that left-reduced elements of $\C_n$ can serve as canonical representatives of cosets $\C_n^0U$.  In other words, $\C_n$ is a disjoint union of cosets $\C_n^0U$ and each coset contains a unique left-reduced element, which can be efficiently computed.  We refer to the unique left-reduced element of a coset $\C_n^0U$ as the {\em left-reduced form} of $U$ and denote it $\mathsf{leftReduce}(U)$.  Our symplectic matrix data structure described in \ssec{datastructure} enables the computation of $\mathsf{leftReduce}(U)$ for a randomly picked matrix $U\,{\in}\, \C_n$  in time less than $2{\cdot}10^{-8}$ seconds for any $n\,{\le}\,6$ on a server-class CPU, in this case an Intel\textsuperscript{\textregistered} Xeon\textsuperscript{\textregistered} CPU E7-4850 v4 @ 2.10GHz.

\begin{lemma}
\label{lem:left_reduced}
Each coset $\C_n^0U$ with $U\,{\in}\, \C_n$ contains a unique left-reduced element that can be computed in time $O(n^2)$, given symplectic matrix representation of $U$.
\end{lemma}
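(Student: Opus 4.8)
The strategy is to reduce everything to the single‑qubit subgroups $\C_{n,j}$. First I would record that $\C_n^0$ is the internal direct product $\C_{n,1}\times\C_{n,2}\times\cdots\times\C_{n,n}$: the $\C_{n,j}$ pairwise commute, they generate $\C_n^0$, and any element lying in $\C_{n,j}$ and in the subgroup generated by the remaining factors fixes every basis vector $e^i$ and hence equals $I$. So an arbitrary $K\in\C_n^0$ factors as $K=K_1K_2\cdots K_n$ with $K_j\in\C_{n,j}$. The next step is to compute how left multiplication by such a $K$ acts on the rows of $U$. Since $K_j$ fixes $e^i$ for $i\notin\{j,n+j\}$, one gets $e_iK=e_i$ for all such $i$, while $e_jK = a\,e_j\oplus c\,e_{n+j}$ and $e_{n+j}K = b\,e_j\oplus d\,e_{n+j}$, where $\left(\begin{smallmatrix}a&c\\ b&d\end{smallmatrix}\right)\in\mathrm{GL}(2,\F_2)$ is the block of $K_j$. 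Hence, writing $r_i:=e_iU$, the matrix $KU$ has the same rows as $U$ except that rows $j$ and $n+j$ become $a\,r_j\oplus c\,r_{n+j}$ and $b\,r_j\oplus d\,r_{n+j}$. In particular the $j$‑th clause of the left‑reduced condition \eqref{left_reduced} for $KU$ depends only on the factor $K_j$, so the $n$ clauses decouple and it suffices to treat one index $j$ at a time.

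Fix $j$. Because $U$ is invertible, $r_j$ and $r_{n+j}$ are linearly independent, so they span a $2$‑dimensional space $V_j\subseteq\F_2^{2n}$ containing exactly three nonzero vectors; list them in lexicographic order as $s_1<s_2<s_3$. As $K_j$ ranges over $\C_{n,j}$, the ordered pair of new rows $(v,w):=(a\,r_j\oplus c\,r_{n+j},\,b\,r_j\oplus d\,r_{n+j})$ ranges bijectively over the ordered bases of $V_j$ (the coordinate vectors of $v,w$ in the basis $(r_j,r_{n+j})$ are exactly the two rows of the invertible block of $K_j$), and there are exactly $6=|\C_{n,j}|$ of these. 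Now observe that $v\oplus w$ is always the third nonzero vector of $V_j$, so the $j$‑th clause of \eqref{left_reduced}, namely $v<w<v\oplus w$, asks precisely that $s_1,s_2,s_3$ appear in this order; this forces $v=s_1$, $w=s_2$. (The remaining ordered bases $(s_1,s_3)$ and $(s_2,s_3)$ give $v\oplus w=s_2$ and $v\oplus w=s_1$ respectively, each smaller than $w$, and the three ordered bases with $v>w$ are excluded immediately.) Thus there is a unique $K_j\in\C_{n,j}$ making rows $j$ and $n+j$ of $K_jU$ satisfy the $j$‑th clause.

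Combining over all $j$, the element $K:=K_1K_2\cdots K_n$ is the unique member of $\C_n^0$ for which $KU$ is left‑reduced; this establishes both existence and uniqueness of the left‑reduced element in the coset $\C_n^0U$, and $\mathsf{leftReduce}(U):=KU$ is therefore well defined. For the complexity bound, the computation is: for each $j=1,\dots,n$, read $r_j$ and $r_{n+j}$, form $r_j\oplus r_{n+j}$, sort the three bit strings (each of length $2n$) lexicographically, and replace rows $j$ and $n+j$ by the two smallest; each operation is $O(n)$, giving $O(n^2)$ overall (which also matches the size of the input matrix). I expect the only subtle part to be the bookkeeping of the first paragraph --- pinning down the $\mathrm{GL}(2,\F_2)$‑action on \emph{rows} rather than columns and verifying that the per‑index clauses truly decouple; the rest is the one‑line combinatorial fact about ordered bases of a plane over $\F_2$.
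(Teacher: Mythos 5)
Your proof is correct and follows essentially the same route as the paper's: both arguments use the linear independence of the rows $e_jU$, $e_{n+j}U$ (hence the distinctness of the three nonzero vectors of the plane they span), identify the six elements of $\C_{n,j}$ with the six ways of reordering those vectors, observe that exactly one choice achieves the ordering of Eq.~(\ref{left_reduced}), and note the per-qubit clauses decouple, giving $O(n)$ work per qubit and $O(n^2)$ overall. The only difference is cosmetic — you phrase the counting via ordered bases of $V_j$ where the paper phrases it via permutations of $x_j,y_j,z_j$ — and your extra bookkeeping on the row action of $\mathrm{GL}(2,\F_2)$ is accurate.
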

\begin{proof}
First note that the rows of a symplectic matrix are linearly independent.  Thus for each qubit $j$ the bit strings $x_j\,{:=}\, e_j U$, $z_j\,{:=}\, e_{n+j}U$, and $y_j\,{:=}\, (e_j {\oplus} e_{n+j})U$ are all distinct: $x_j\,{\ne}\, y_j\,{\ne}\, z_j$.  It follows directly from the above definitions that multiplying $U$ on the left by the elements of the subgroup $\C_{n,j}$ we can implement any permutation of the bit strings $x_j$, $y_j$, and $z_j$.  For example, the Hadamard gate swaps $x_j$ and $z_j$, the Phase gate swaps $x_j$ and $y_j$.  Since $|\C_{n,j}|\,{=}\,6$, there is a one-to-one correspondence between elements of $\C_{n,j}$ and permutations of $x_j$, $y_j$, $z_j$. Multiply $U$ on the left by the unique element of $\C_{n,j}$ that permutes the bit strings such that $x_j\,{<}\,z_j\,{<}\,y_j$. Now Eq.~(\ref{left_reduced}) is satisfied for the $j$-th qubit. Repeating this for all $n$ qubits and noting that $\C_n^0$ is generated by the subgroups $\C_{n,j}$ proves that the coset $\C_n^0U$ contains a unique left-reduced element.  All above steps can be efficiently implemented.  Indeed, given a matrix $U$, one can compute the bit strings $x_j$, $y_j$, and $z_j$ and sort all three in time $O(n)$. Repeating this for all $n$ qubits gives the total runtime of $O(n^2)$.
\end{proof}

Given a matrix $U\,{\in}\,\C_n$ define a double coset
\[
[U]^{\loc} := \C_n^0 U \C_n^0.
\]
It includes all elements of the Clifford group obtained from $U$ by adding single-qubit Clifford gates on the left and on the right.  Clearly, the full Clifford group $\C_n$ is a disjoint union of double cosets $[U]^{\loc}$ and the cost of the matrix $U$ depends only on the double coset that contains $U$.  The next step is to choose an efficiently computable canonical representative of each double coset.  

First define the map $\chi\, {:} \, \F_2^{2n} \to \F_2^n$ as 
\[
\chi(v):=[v_1 \lor v_{n+1}, \; v_2 \lor v_{n+2}, \ldots, v_{n} \lor  v_{2n}],
\]
where $\lor$ stands for the logical OR operation.
The $j$-th component of $\chi(v)$ is non-zero iff $v_j{=}1$ or $v_{n+j}{=}1$ (the bitstring $\chi(v)$ can be interpreted as the support of an $n$-qubit Pauli operator parameterized by $v$, according to the standard binary parameterization of Pauli operators \cite{aaronson2004improved}).  We claim that the map $\chi$ is invariant under left multiplications by the elements of the local subgroup, in the sense that
\be
\label{chi_invariant}
\chi(Lv) = \chi(v) \mbox{ for all $L\,{\in}\, \C_n^0$ and $v\,{\in}\, \F_2^{2n}$}.
\ee
Indeed, it suffices to check Eq.~(\ref{chi_invariant}) for the special case $L\,{\in}\, \C_{n,j}$ (since the local subgroup is generated by matrices $L\,{\in}\, \C_{n,j}$ with $j=1,2,\ldots,n$).  As discussed above, the action of $L\,{\in}\, \C_{n,j}$ on $v$ is equivalent to applying a $2{\times} 2$ binary invertible matrix to the components $v_j$ and $v_{n+j}$ while all other components of $v$ remain unchanged.  Since an invertible matrix maps nonzero vectors to nonzero vectors, $(Lv)_j \lor (Lv)_{n+j}\,{=}\,1$ iff $v_j \lor v_{n+j}\,{=}\,1$.  This implies Eq.~(\ref{chi_invariant}).

A matrix $U\,{\in}\, \C_n$ is said to be {\em locally ordered} if $U$ is left-reduced and
\begin{equation}
\label{locally_reduced}
\chi(U e^j) \le \chi(U e^{n+j}) \le  \chi(U e^j \oplus Ue^{n+j})  \mbox{ for all $j=1,2,\ldots,n$}.
\end{equation}
Here  bit strings are compared using the lexicographic order.  Let ${\cal L}(U)\,{\subseteq}\, [U]^{\loc}$ be the set of all locally ordered elements of the double coset $[U]^{\loc}$.  Define a {\em locally reduced} form of the matrix $U\,{\in}\, \C_n$, denoted $\mathsf{localReduce}(U)$, as the  lexicographically smallest element of the set ${\cal L}(U)$.  The following lemma shows that locally reduced elements of $\C_n$ can serve as canonical representatives of the double cosets $[U]^{\loc}$.  In other words, $\C_n$ is a disjoint union of the double cosets $[U]^{\loc}$ and each double coset contains a unique locally reduced element that can be efficiently computed (albeit slightly less efficiently than $\mathsf{leftReduce}$).  The symplectic matrix data structure described in \ssec{datastructure} enables the computation of $\mathsf{localReduce}(U)$ for a randomly picked matrix $U\,{\in}\, \C_n$ in time less than $4{\cdot}10^{-7}$ seconds for all $n\,{\le}\, 6$ on a server-class CPU, in this case an Intel\textsuperscript{\textregistered} Xeon\textsuperscript{\textregistered} CPU E7-4850 v4 @ 2.10GHz.

\begin{lemma}
\label{lem:locally_reduced}
Each double coset $[U]^{\loc}\,{=}\,\C_n^0 U \C_n^0$ contains a unique locally reduced element that can be computed in time $O(n^2 6^n)$, given the symplectic matrix $U$.
\end{lemma}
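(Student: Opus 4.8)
The plan is to mirror the structure of the proof of \lem{left_reduced}, but now applied to both the left and the right local subgroups. I would first establish \emph{existence}: starting from an arbitrary $U\,{\in}\,\C_n$, apply right multiplications by $\C_n^0$ to bring the columns of $U$ into a canonical shape, then apply left multiplications by $\C_n^0$ to left-reduce the result, and argue that these two steps can be arranged so as not to interfere. The key observation is the invariance \eq{chi_invariant}: the support pattern $\chi(Ue^j)$ is unchanged under left multiplication by $\C_n^0$, so once the right multiplications have been fixed to enforce Eq.~(\ref{locally_reduced}) — which only involves the $\chi$-values of the columns — a subsequent left reduction (which, by \lem{left_reduced}, is unique and exists) will preserve the locally-ordered column structure. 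Concretely: for each qubit $j$, right multiplication by the unique element of $\C_{n,j}$ that sorts the triple $\chi(Ue^j),\chi(Ue^{n+j}),\chi(Ue^j\oplus Ue^{n+j})$ into nondecreasing order puts $U$ into a form satisfying Eq.~(\ref{locally_reduced}); doing this for all $j$ works because the $\C_{n,j}$ commute and act on disjoint column-pairs. Then $\mathsf{leftReduce}$ of this matrix lies in ${\cal L}(U)$, so ${\cal L}(U)\ne\emptyset$, and the lexicographically smallest element is well-defined.

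For \emph{uniqueness} of the locally reduced element (i.e., that $\mathsf{localReduce}$ depends only on the double coset), I would note that $\mathsf{localReduce}(U)$ is by definition the lex-smallest element of ${\cal L}(U)$, and ${\cal L}(U)$ depends only on $[U]^{\loc}=\C_n^0 U\C_n^0$, since ${\cal L}(U)$ is defined as the set of locally ordered elements of that double coset. Hence if $[U]^{\loc}=[V]^{\loc}$ then ${\cal L}(U)={\cal L}(V)$ and $\mathsf{localReduce}(U)=\mathsf{localReduce}(V)$; and distinct double cosets are disjoint, so $\mathsf{localReduce}$ is a genuine canonical representative map. (Here the only thing to check is that ${\cal L}(U)$ is nonempty, which is the existence part above.)

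For the \emph{running time} $O(n^2 6^n)$, the point is that enforcing Eq.~(\ref{locally_reduced}) does not pin down the right multiplier uniquely: when two or three of the $\chi$-values coincide, there is freedom in which element of $\C_{n,j}$ to use, so the set of "locally ordered" matrices reachable by right multiplication can be as large as a product over $j$ of up to $6$ choices, i.e. up to $6^n$ candidate right-multipliers. For each of these at most $6^n$ right-translates $UR$ with $R\,{\in}\,\C_n^0$, we compute $\mathsf{leftReduce}(UR)$ in time $O(n^2)$ by \lem{left_reduced}, obtaining an element of ${\cal L}(U)$, and we track the lex-smallest one seen. Thus the total cost is $O(n^2 6^n)$. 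One should justify that every element of ${\cal L}(U)$ arises this way — i.e., that each locally ordered element of the double coset is the left-reduced form of some $UR$ with $R$ among the enumerated right multipliers — which follows because a locally ordered element is in particular left-reduced, hence equals $\mathsf{leftReduce}$ of itself, and its columns differ from those of $U$ only by a right $\C_n^0$-multiplier that sorts the $\chi$-triples, so $R$ is one of the enumerated candidates.

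The main obstacle I anticipate is the bookkeeping around the interaction of the left and right reductions: one must verify carefully that performing the right multiplication first and then $\mathsf{leftReduce}$ genuinely lands in ${\cal L}(U)$ (this is where \eq{chi_invariant} is essential — without it, left reduction could scramble the $\chi$-ordering), and that the enumeration over the $\le 6^n$ right multipliers is exhaustive for ${\cal L}(U)$ rather than merely producing \emph{some} locally ordered elements. Everything else is a direct adaptation of the $O(n^2)$ sorting argument from \lem{left_reduced}.
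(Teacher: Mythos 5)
Your proposal is correct and follows essentially the same route as the paper: it identifies the per-qubit sets of sorting right-multipliers (of size at most $6$ each, hence at most $6^n$ candidates $R$), uses the invariance of $\chi$ under left multiplication to show $\mathsf{leftReduce}(UR)$ is locally ordered, argues exhaustiveness of the enumeration exactly as the paper does, and obtains the $O(n^2 6^n)$ bound by the same accounting. No gaps.
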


\begin{proof}
For each qubit $j$ define the bit strings $x_j\,{:=}\,\chi(Ue^j)$, $z_j\,{:=}\,\chi(U e^{n+j})$, and $y_j\,{:=}\,\chi(U e^j{\oplus}Ue^{n+j})$.  Same as before, multiplying $U$ on the right by the elements of the subgroup $\C_{n,j}$ one can implement any permutation of the bit strings $x_j$, $y_j$, and $z_j$. 
Define a subset $\S_j\,{\subseteq}\,\C_{n,j}$ as the one including all elements $R_j\,{\in}\, \C_{n,j}$ such that the right multiplication $U\,{\gets}\, UR_j$ permutes the bit strings $x_j$, $y_j$, and $z_j$ into the non-decreasing order $x_j\,{\le}\,z_j\,{\le}\,y_j$.  Note that $\S_j$ is non-empty since the right multiplication by the elements of $\C_{n,j}$ can implement any permutation of $x_j$, $y_j$, and $z_j$.  Recall that the set ${\cal L}(U)$ includes all locally ordered elements of the double coset $[U]^{\loc}$.  We claim that
\be
\label{L(U)}
{\cal L}(U)=\{ \mathsf{leftReduce}(UR_1R_2\cdots R_n)\, {:} \, R_1\,{\in}\,\S_1,  R_2\,{\in}\,\S_2, \ldots, R_n\,{\in}\,\S_n\}.
\ee
Indeed, ${\cal L}(U) \subseteq [U]^{\loc}$ since any matrix $W\,{\in}\,{\cal L}(U)$ has the form $W\,{=}\,LUR$ for some $L,R\,{\in}\, \C_n^0$.  Furthermore,  ${\cal L}(U)$ is non-empty since each subset $\S_j$ is non-empty.  Let us check that any element $W\,{\in}\, {\cal L}(U)$ is locally ordered.  Indeed, pick any matrices $R_j\,{\in}\, \S_j$ and let $R=R_1 R_2\cdots R_n$.  By construction, the matrix $V\,{=}\,UR$ satisfies Eq.~(\ref{locally_reduced}) with $U$ replaced by $V$.  Let $W\,{=}\,\mathsf{leftReduce}(V)$. Then $W\,{=}\,LV$ for some $L\,{\in}\,\C_n^0$.  The invariance of the map $\chi$ under left multiplications by the elements of the local subgroup, see Eq.~(\ref{chi_invariant}),  implies that $W$ satisfies Eq.~(\ref{locally_reduced}) with $U$ replaced by $W$.  Thus $W$ is locally ordered.  Conversely, suppose $W\,{\in}\, [U]^{\loc}$ is locally ordered.  Then $W\,{=}\,LUR$ for some $L,R\,{\in}\,\C_n^0$ and $\mathsf{leftReduce}(W)\,{=}\,W$.  The invariance of the map $\chi$ under left multiplications by the elements of the local subgroup and the local ordering condition imply that the matrix $V\,{=}\,UR$ satisfies Eq.~(\ref{locally_reduced}) with $U$ replaced by $V$.  Thus $R\,{=}\,R_1R_2\cdots R_n$ for some $R_j\,{\in}\,\S_j$. This proves that $W\,{\in}\,{\cal L}(U)$. The uniqueness follows from the ability to encode the elements of the sets considered by distinct integers and the existence of the smallest integer in any finite set of integers.

It remains to check that the set ${\cal L}(U)$ can be computed in time $O(n^2 6^n)$.  Indeed, for any given qubit $j$ one can compute the bit strings $x_j$, $y_j$, and $z_j$ and the subset $S_j \,{\subseteq}\, \C_{n,j}$ in time $O(n)$.  Note that $|\S_j|\,{\le}\, |\C_{n,j}|\,{=}\,6$.  Thus the number of matrices $R\,{=}\,R_1R_2\cdots R_n$ with $R_j\,{\in}\, \S_j$ is at most $6^n$.  Since the right multiplication by the elements of the subgroup $\C_{n,j}$ changes at most two rows of a matrix, we can compute $UR$ in time $O(n^2)$.  By \lem{left_reduced}, computing the left reduced form of $UR$ takes time $O(n^2)$.  Thus the overall runtime of computing ${\cal L}(U)$ is $O(n^2 6^n)$.  Once the set ${\cal L}(U)$ is computed, finding its lexicographically smallest element takes time $O(n|{\cal L}(U)|)\,{=}\, O(n6^n)$.
\end{proof}

{\em Comment 1:} 
Our implementation of $\mathsf{localReduce}(U)$ relies on a streamlined version of the  above algorithm with a modified definition of the subsets $\S_j$. Namely, we define $\S_j$ as a set of all elements $R_j\,{\in}\, \C_{n,j}$ such that the right multiplication $U\,{\gets}\, UR_j$ permutes the bit strings $x_j$, $y_j$, and $z_j$ into the non-decreasing order and $\mathsf{leftReduce}(UR_j)\,{\ne}\, \mathsf{leftReduce}(U)$.  The last condition rules out the possibility that the right multiplication of $U$ by $R_j$ is equivalent to a left multiplication of $U$ by some element of the local subgroup (for example, this is the case if $U$ is the identity matrix).  Since $\mathsf{leftReduce}(U)$ depends only on the coset $\C_n^0 U$, the left multiplication of $U$ by any element of the local subgroup does not change  $\mathsf{leftReduce}(U)$.  Thus the set of locally ordered elements ${\cal L}(U)$ can be computed
using Eq.~(\ref{L(U)}) with the modified definition of $\S_j$.

{\em Comment 2:} 
We empirically observed that the average-case runtime of the above algorithm is much better than the worst case upper bound of $O(n^2 6^n)$.  Indeed, a direct inspection shows that the runtime scales as $O(n^2M)$, where $M\,{=}\,|\S_1|{\cdot}|\S_2|{\cdot}\ldots{\cdot}|\S_n|$.  For randomly picked matrices $U\,{\in}\, \C_6$ we observed that $M\,{\approx}\,5$ on average even though $M\,{=}\,|\C_6^0|\,{=}\,6^6\,{=}\,46{,}656$ in the worst case.  We leave it as an open question whether the average-case runtime of the above algorithm scales polynomially with $n$. 

Recall that we consider the symmetric group $S_n$ that includes all qubit permutations as a subgroup of $\C_n$.  If $w$ is a permutation of integers $\{1,2,\ldots,n\}$, then the corresponding symplectic matrix $W\,{\in}\, S_n$ acts on the basis vectors as $W e^j\,{=}\,e^{w(j)}$ and $W e^{n+j}\,{=}\,e^{n+w(j)}$ for all $j\,{=}\,1,2,\ldots,n$.  Given a matrix $U\,{\in}\, \C_n$, define the equivalence class
\[
[U] := \{ LW^{-1}UW R\,{:}\, L,R\in \C_n^0, W\,{\in}\,S_n\}.
\] 
The rest of this section is devoted to choosing an efficiently computable canonical representative of each class $[U]$.  Let $\ZZ^{n\times n}$ be the set of $n{\times}n$ matrices with integer entries.  Define the map $\kappa\,{:}\, \C_n \,{\to}\, \ZZ^{n\times n}$ such that the matrix element of $\kappa(U)$ located at the $i$-th row and the $j$-th column is the rank of the $2{\times}2$ submatrix of $U$ formed by the intersection of rows $i$ and $i{+}n$ and columns $j$ and $j{+}n$.  The rank is computed over the binary field $\F_2$.  In other words, each matrix element of $\kappa(U)$ has the form
\[
\kappa(U)_{i,j} = \mathrm{rank}_{\F_2}\left[ \begin{array}{ll}
U_{i,j} & U_{i,n+j} \\
U_{n+i,j} & U_{n+i,n+j} \\
\end{array}
\right].
\]
By definition, $\kappa(U)$ contains entries from the set $\{0,1,2\}$ and the full matrix $\kappa(U)$ can be computed in time $O(n^2)$.  We claim that the left and right multiplications of $U$ by the single-qubit Clifford gates leave $\kappa(U)$ invariant, that is, 
\be
\label{kappa_invariant}
\kappa(LUR)=\kappa(U) \mbox{ for all  $L,R\,{\in}\, \C_n^0$}.
\ee
Indeed, suppose first that $L{=}I$ and $R\,{\in}\,\C_{n,j}$. Right multiplication $U\,{\gets}\, UR$ applies an invertible linear transformation to the pair of columns $Ue^j$ and $Ue^{n+j}$, and acts trivially on the remaining columns.  Since the matrix rank is invariant under applying an invertible linear transformation, we conclude that $\kappa(UR)\,{=}\,\kappa(U)$ for all $R\,{\in}\,\C_{n,j}$.  Same argument shows that $\kappa(LU)\,{=}\,\kappa(U)$ for all $L\,{\in}\, \C_{n,j}$.  This proves Eq.~(\ref{kappa_invariant}) since the local subgroup $\C_n^0$ is generated by the subgroups $\C_{n,j}$.

Let $\kappa_{min}(U)$ be the lexicographically smallest matrix in the set of matrices $\{ \kappa(W^{-1} U W)\,{:}\, W\,{\in}\,S_n\}$.  Define a set of qubit permutations 
\[
{\cal S}(U):= \{ W \in S_n \,{:}\, \kappa(W^{-1} U W)=\kappa_{min}(U)\}
\]
and a set of matrices
\[
{\cal R}(U):= \{ \mathsf{localReduce}(W^{-1} U W) \,{:}\, W \in {\cal S}(U) \}.
\]
Note that ${\cal R}(U)\,{\subseteq}\,[U]$ since 
\[
\mathsf{localReduce}(W^{-1} U W) = L W^{-1} U W R \in [U]
\]
for some $L,R\in \C_n^0$.  Define a {\em fully reduced} form of a matrix $U\,{\in}\, \C_n$, denoted $\mathsf{ReduceU}(U)$, as the lexicographically smallest element of the set ${\cal R}(U)$.  The following lemma shows that the fully reduced elements of $\C_n$ can serve as canonical representatives of the equivalence classes $[U]$.  In other words, $\C_n$ is a disjoint union of the equivalence classes $[U]$ and each class contains a unique fully reduced element that can be efficiently computed (albeit slightly less efficiently than $\mathsf{localReduce}$).  The symplectic matrix data structure enables the computation of $\mathsf{ReduceU}(U)$ for a randomly picked matrix $U\,{\in}\,\C_n$ in time less than $3{\cdot}10^{-6}$ seconds for $n{=}6$ and time less than $10^{-6}$ seconds for all $n\,{\le}\, 5$ on a server-class CPU, in this case an Intel\textsuperscript{\textregistered} Xeon\textsuperscript{\textregistered} CPU E7-4850 v4 @ 2.10GHz.

\begin{lemma}
Each equivalence class $[U]$ with $U\,{\in}\,\C_n$ contains a unique fully reduced element that can be computed in time $O(n^2{\cdot}n!  + t_n{\cdot}|{\cal S}(U)|)$, given the symplectic matrix representation of $U$.  Here $t_n$ is the runtime of $\mathsf{localReduce}$ for elements of $\C_n$.
\end{lemma}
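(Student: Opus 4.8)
The plan is to follow the pattern of the proofs of \lem{left_reduced} and \lem{locally_reduced}: first show that $\mathsf{ReduceU}$ takes a constant value on each equivalence class $[U]$, which immediately yields both existence and uniqueness of the fully reduced element, and then bound the runtime by accounting for each stage of the construction of $\mathsf{ReduceU}(U)$. The main obstacle is the bookkeeping in the invariance step — one must check that conjugation by $S_n$ interacts correctly with the left and right local multiplications, so that ${\cal S}$ transforms by a left translation and ${\cal R}$ is genuinely class-invariant; once this is pinned down, uniqueness and the runtime estimate are routine.

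For the invariance step, the key preliminary observation is that the permutation subgroup $S_n$ normalizes the local subgroup, i.e.\ $W^{-1}\C_n^0 W = \C_n^0$ for every $W\,{\in}\, S_n$, since conjugating a single-qubit gate on qubit $j$ by a qubit permutation produces a single-qubit gate on the permuted qubit. Combined with the $\kappa$-invariance of Eq.~(\ref{kappa_invariant}), this lets one rewrite, for any $V = L W_0^{-1} U W_0 R$ with $L,R\,{\in}\,\C_n^0$ and $W_0\,{\in}\, S_n$,
\[
X^{-1} V X = (X^{-1}LX)\,\big((W_0X)^{-1} U (W_0X)\big)\,(X^{-1}RX),
\]
with $X^{-1}LX,\,X^{-1}RX\,{\in}\,\C_n^0$, and hence $\kappa(X^{-1}VX) = \kappa\big((W_0X)^{-1}U(W_0X)\big)$ for all $X\,{\in}\, S_n$. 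Letting $X$ range over $S_n$ and substituting $Y=W_0X$ gives $\{\kappa(X^{-1}VX):X\,{\in}\, S_n\} = \{\kappa(Y^{-1}UY):Y\,{\in}\, S_n\}$, so $\kappa_{min}(V)=\kappa_{min}(U)$ and ${\cal S}(V) = W_0^{-1}{\cal S}(U)$. Moreover, for $X\,{\in}\,{\cal S}(V)$ the matrix $X^{-1}VX$ lies in the double coset $[(W_0X)^{-1}U(W_0X)]^{\loc}$, so since $\mathsf{localReduce}$ depends only on the double coset (by \lem{locally_reduced}) we have $\mathsf{localReduce}(X^{-1}VX)=\mathsf{localReduce}((W_0X)^{-1}U(W_0X))$; as $W_0 X$ then ranges over ${\cal S}(U)$ this yields ${\cal R}(V)={\cal R}(U)$, and therefore $\mathsf{ReduceU}(V)=\mathsf{ReduceU}(U)$.

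Existence and uniqueness then follow in one line each. The set ${\cal R}(U)$ is finite and nonempty (because ${\cal S}(U)$ is nonempty and $\mathsf{localReduce}$ is always defined), so its lexicographically smallest element $\mathsf{ReduceU}(U)$ is well defined; it lies in ${\cal R}(U)\subseteq[U]$, and since $\mathsf{ReduceU}$ is constant on $[U]$ it equals its own fully reduced form, hence is a fully reduced element of $[U]$. Conversely, if $V_1,V_2\,{\in}\,[U]$ are both fully reduced then $V_1=\mathsf{ReduceU}(V_1)=\mathsf{ReduceU}(U)=\mathsf{ReduceU}(V_2)=V_2$.

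For the runtime I would account for the construction directly. For a single $W\,{\in}\, S_n$, forming $W^{-1}UW$ is a permutation of rows and columns costing $O(n^2)$ and computing $\kappa$ of the result costs $O(n^2)$; doing this over all $n!$ permutations and scanning the resulting $n{\times}n$ integer matrices for the lexicographically smallest costs $O(n^2{\cdot}n!)$ and produces both $\kappa_{min}(U)$ and ${\cal S}(U)$. Then each $W\,{\in}\,{\cal S}(U)$ contributes one call to $\mathsf{localReduce}$ at cost $t_n$, and a final scan of the $|{\cal S}(U)|$ resulting $2n{\times}2n$ symplectic matrices for the lexicographically smallest costs $O(n^2|{\cal S}(U)|)$, which is absorbed into $O(n^2{\cdot}n!)$ since ${\cal S}(U)\subseteq S_n$. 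Summing gives the claimed bound $O(n^2{\cdot}n! + t_n{\cdot}|{\cal S}(U)|)$.
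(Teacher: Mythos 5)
Your proposal is correct and follows essentially the same route as the paper's proof: establish that ${\cal S}$ transforms by the translation ${\cal S}(V)=W_0^{-1}{\cal S}(U)$ and that ${\cal R}$ is class-invariant (using the $\kappa$-invariance of Eq.~(\ref{kappa_invariant}), the fact that $S_n$ normalizes $\C_n^0$, and the double-coset invariance of $\mathsf{localReduce}$ from \lem{locally_reduced}), then read off uniqueness and the two runtime terms. You are merely a bit more explicit than the paper about the normalization of $\C_n^0$ by $S_n$ and the per-permutation cost accounting, but the substance is identical.
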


\begin{proof}
Consider a matrix $U\,{\in}\, \C_n$.  It follows directly from the definitions that $\mathsf{ReduceU}(U)\,{\in}\,[U]$.  Thus it suffices to check that 
\be
\label{claim1}
{\cal R}(U')= {\cal R}(U) \mbox{ for all $U'\in [U]$}.
\ee
Indeed, this equation implies $\mathsf{ReduceU}(U)\,{=}\,\mathsf{ReduceU}(U')$ for all  $U'\,{\in}\,[U]$, that is, the equivalence class $[U]$ contains a unique reduced element.  Let us prove Eq.~(\ref{claim1}).  Write $U'\,{=}\,L W^{-1} U W R$ for some $L,R\,{\in}\, \C_n^0$ and $W\,{\in}\,S_n$. Then
\begin{align}
{\cal R}(U')&=\{ \mathsf{localReduce}(\tilde{W}^{-1} L W^{-1} U W  R \tilde{W}) \,{:}\, \tilde{W} \in {\cal S}(U') \} \nonumber \\
& = \{ \mathsf{localReduce}(L' \tilde{W}^{-1} W^{-1} U W  \tilde{W} R') \,{:}\, \tilde{W} \in {\cal S}(U') \} \nonumber \\
& = \{ \mathsf{localReduce}( \tilde{W}^{-1} W^{-1} U W  \tilde{W} ) \,{:}\, \tilde{W} \in {\cal S}(U') \}. \label{RU1}
\end{align}
Here $L'\,{:=}\,\tilde{W}^{-1} L \tilde{W}\,{\in}\, \C_n^0$ and $R'\,{:=}\,\tilde{W}^{-1} R \tilde{W}\,{\in}\,\C_n^0$.  In the third equality we noted that $\mathsf{localReduce}$ is invariant under left/right multiplications by the elements of the local subgroup $\C_n^0$, see  \lem{locally_reduced}.  Finally, the invariance of the map $\kappa$ under the left and right multiplications by the elements of the local subgroup, see Eq.~(\ref{kappa_invariant}), implies $\kappa_{min}(U')=\kappa_{min}(U)$.  Thus $\tilde{W}\,{\in}\,{\cal S}(U')$ iff $W\tilde{W} \,{\in}\,{\cal S}(U)$.  Combining this and Eq.~(\ref{RU1}) gives ${\cal R}(U')\,{=}\,{\cal R}(U)$, as claimed. 

The runtime stated in the lemma consists of two terms. The term $O(n^2{\cdot}n!)$ is the time needed to compute the set of permutations ${\cal S}(U)$. The term $O(t_n {\cdot}|{\cal S}(U)|)$ is the time needed to compute the set of matrices ${\cal R}(U)$ and pick the lexicographically smallest element of ${\cal R}(U)$.
\end{proof}

{\em Comment 3:} 
Our implementation of $\mathsf{ReduceU}(U)$ relies on a streamlined version of the above algorithm with a modified definition of the set ${\cal S}(U)$. Namely, we define ${\cal S}(U)$ as the set of all permutations $W\,{\in}\, S_n$ such that $\kappa(W^{-1} U W)\,{=}\,\kappa_{min}(U)$ and $\mathsf{leftReduce}(W^{-1} U W)\,{\ne}\,\mathsf{leftReduce}(U)$.  The last condition rules out the possibility that the conjugation of $U$ by $W$ is equivalent to a left multiplication of $U$ by some element of the local subgroup (for example, this is the case if $U$ is the identity matrix).  Since $\mathsf{localReduce}(U)$ depends only on the double coset $\C_n^0 U\C_n^0$, a left multiplication of $U$ by any element of the local subgroup does not change $\mathsf{localReduce}(U)$.  Thus one can compute the set ${\cal R}(U)$ using the modified definition of ${\cal S}(U)$.

{\em Comment 4:}
We empirically observed that $|{\cal S}(U)|{=}1$ for typical a element of the Clifford group and the maximal value of $|{\cal S}(U)|$ is $14$.  The mean value of $|{\cal S}(U)|$ is approximately $1.03$ for a randomly picked $U\,{\in}\,\C_6$. 
 
By a slight abuse of terminology, we refer to the computationally-defined fully reduced elements of the Clifford group as the reduced elements in the remainder of the paper.  This should not lead to confusion since the left-reduced and the locally reduced forms are used only in this subsection.

\subsection{Data structure}\label{ssec:datastructure}

By definition, any element of the Clifford group  $U\,{\in}\,\C_n$ can be represented by a binary matrix of size $2n{\times}2n$.  However, if we only care about the reduced form of $U$, a slightly more efficient representation is possible, as given by the following lemma. 

\begin{lemma}
Let $U'$ be the matrix obtained from $U\,{\in}\,\C_n$ by removing the $n$-th and the $2n$-th rows from it.  Then $U$ is uniquely determined by $U'$ up to left multiplication by the single-qubit Clifford gates acting on the $n$-th qubit. 
\end{lemma}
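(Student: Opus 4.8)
The plan is to reduce the statement to a short calculation over $\F_2$. Write $r_i$ for the $i$-th row of $U$, so that the rows $\{r_j : j \notin \{n,2n\}\}$ are exactly the rows of $U'$, and these $2n-2$ rows are linearly independent because $U$ is invertible. The symplectic condition on $U$ (equivalently $U\Omega_n U^T = \Omega_n$) is the assertion that $r_i \Omega_n r_j^T = (\Omega_n)_{i,j}$ for all $i,j$; over $\F_2$ the form $v \mapsto v\Omega_n v^T$ vanishes identically, so the diagonal relations are automatic and these off-diagonal relations are the only constraints. First I would sort these relations into three groups: those involving only rows of $U'$ (which hold because $U'$ comes from a genuine symplectic matrix); those of the form $r_n \Omega_n r_j^T = 0$ or $r_{2n}\Omega_n r_j^T = 0$ with $j \notin \{n,2n\}$; and the single relation $r_n \Omega_n r_{2n}^T = 1$. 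Reading off the relevant entries of $\Omega_n = \bigl(\begin{smallmatrix} 0 & I_n \\ I_n & 0\end{smallmatrix}\bigr)$ shows that a pair $(s,t)$ can serve as the $n$-th and $2n$-th rows of a symplectic matrix with the prescribed $U'$ if and only if $s,t \in W$ and $s\Omega_n t^T = 1$, where $W := \{v \in \F_2^{2n} : v\Omega_n r_j^T = 0 \text{ for all } j \notin \{n,2n\}\}$ is the $\Omega_n$-orthogonal complement of the row span of $U'$.

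Next I would analyze $W$. Since $\Omega_n$ is nonsingular, the linear functionals $v \mapsto v\Omega_n r_j^T$ are linearly independent whenever the $r_j$ are, so $\dim W = 2n - (2n-2) = 2$. The original rows $r_n, r_{2n}$ lie in $W$ (their pairings with the $r_j$, $j\notin\{n,2n\}$, are the corresponding entries of $\Omega_n$, all $0$), and being two independent vectors they span $W$; in this basis the Gram matrix of $\Omega_n|_W$ is $\bigl(\begin{smallmatrix}0&1\\1&0\end{smallmatrix}\bigr)$, so $\Omega_n|_W$ is a nondegenerate alternating form on a $2$-dimensional $\F_2$-space. In such a space any two \emph{distinct nonzero} vectors $s,t$ satisfy $s\Omega_n t^T = 1$: otherwise $s$ would be $\Omega_n$-orthogonal to $s$ (automatic) and to $t$, hence to all of $W$, contradicting nondegeneracy; conversely $s\Omega_n t^T = 1$ forces $s,t$ to be nonzero and distinct. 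Hence the valid pairs $(s,t)$ are exactly the ordered bases of $W$ — there are $|\mathrm{GL}(2,\F_2)| = 6$ of them, and they form a single orbit of $\mathrm{GL}(W)$ acting on $(r_n, r_{2n})$.

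Finally I would match this $\mathrm{GL}(W)$-action with left multiplication by the subgroup $\C_{n,n}$. Left multiplication $U \mapsto LU$ by $L \in \C_{n,n}$ fixes every row $r_j$ with $j \notin \{n,2n\}$ and replaces $(r_n, r_{2n})$ by an $\F_2$-invertible linear combination of itself; as was already observed in the proof of \lem{left_reduced}, as $L$ ranges over the six-element group $\C_{n,n}$ this realizes all six permutations of the nonzero vectors $\{r_n, r_{2n}, r_n \oplus r_{2n}\}$ of $W$, i.e., all of $\mathrm{GL}(W)$. Therefore the symplectic matrices sharing the given $U'$ are precisely $\{LU : L \in \C_{n,n}\}$, which is the claim.

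The bookkeeping to watch is the very first step: one must check that the symplectic relations really do partition as described, so that no constraint linking $s$ and $t$ beyond $s\Omega_n t^T = 1$ has been overlooked (this is where one uses that the relations $r_i\Omega_n r_j^T = (\Omega_n)_{i,j}$ with $i,j$ both in $\{n,2n\}$ collapse, after discarding the automatic diagonal ones, to the single equation $r_n\Omega_n r_{2n}^T = 1$), and that the $2n-2$ orthogonality functionals defining $W$ are genuinely independent. After that, the argument is pure two-dimensional linear algebra over $\F_2$ and invokes no deep fact.
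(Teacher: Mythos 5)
Your proof is correct and follows essentially the same route as the paper's: identify the two missing rows as spanning the symplectic orthogonal complement of the row span of $U'$, note they are therefore determined up to a $\mathrm{GL}(2,\F_2)$ change of basis, and match those six transformations with left multiplication by the six elements of $\C_{n,n}$. The only difference is that you spell out the bookkeeping the paper leaves implicit, in particular the check that every ordered basis $(s,t)$ of the complement automatically satisfies $s\,\Omega_n t^T=1$, which is the step needed to conclude that all six candidates are genuinely symplectic.
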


\begin{proof}
Let ${\cal L}\,{\subseteq}\,\F_2^{2n}$ be the linear subspace spanned by the $j$-th row of $U$ with $j\,{\notin}\,\{n,2n\}$ and let ${\cal L}^\perp\,{\subseteq}\,\F_2^{2n}$ be the linear subspace spanned by the vectors orthogonal to ${\cal L}$ with respect to the symplectic inner product. Note that ${\cal L}$ depends only on $U'$.  The condition that $U$ is a symplectic matrix implies $\mathrm{span}_{\F_2}( e_n U, e_{2n} U) \,{=}\, {\cal L}^\perp$.  Here we use the notations from \ssec{reduceu}.  The missing pair of rows $e_nU$ and $e_{2n}U$ is uniquely defined by ${\cal L}$ up to an invertible linear transformation $e_nU\gets a e_n U \,{\oplus}\, b e_{2n}U$ and $e_{2n}U\gets c e_n U \,{\oplus}\, d e_{2n}U$ for some
\[
\left[ \ba{cc}
a & c \\
b & d \\
\ea \right] \in  \mathrm{GL}(2,\F_2).
\]
As discussed in \ssec{reduceu}, there is a one-to-one correspondence between such transformations and left multiplications $U\,{\gets}\,LU$, where $L\,{\in}\,\C_{n}^0$ acts non-trivially only on the $n$-th qubit.
\end{proof}

We refer to the matrix $U'$ obtained from $U\,{\in}\,\C_n$ by removing the pair of rows $n$ and $2n$ as a {\em thin matrix} representation of $U$.  Our C\texttt{++} implementation adopts the thin matrix data format for all intermediate steps of the algorithm.  The thin matrix spans $4n(n{-}1)$ bits and can be conveniently distributed over two machine words, each of length $64$ bits.  The first word stores the rows $e_1U$, $e_2U$, $\ldots, e_{n-1}U$ and the second word stores the rows $e_{n+1}U$, $e_{n+2}U$, $\ldots, e_{2n-1}U$.  This leaves $128\,{-}\,4n(n{-}1)|_{n{\leq}6} \geq 8$ free bits that can be conveniently used to specify the cost-reducing generator in the augmented database, see \ssec{database}.  Recall that the number of generators is $m\,{=}\,9n(n{-}1)/2|_{n{\leq}6}\le 135$. Thus the generator can be specified using only $8$ bits.  Note also that storing the full matrix $U\,{\in}\,\C_n$ using only two machine words is impossible for $n{=}6$, as it requires $4n^2|_{n{=}6} \,{=}\, 144$ bits.

The thin matrix format enables fast left and right multiplication by the single-qubit and two-qubit Clifford gates, that require at most 24 CPU instructions per gate for all $n{\le} 6$  (each instruction implements a bitwise operation on a single machine word).  When needed, the thin matrix $U'$ can be expanded into the full symplectic matrix $U\,{\in}\,\C_n$ by calculating the missing pair of rows $e_nU$ and $e_{2n}U$ using the symplectic version of Gram-Schmidt orthogonalization.  Our implementation converts the thin matrix to the full matrix in time less than $2{\cdot}10^{-7}$ seconds for any $n{\le}6$ on a server-class CPU, in this case an Intel\textsuperscript{\textregistered} Xeon\textsuperscript{\textregistered} CPU E7-4850 v4 @ 2.10GHz, which is negligible compared with the time it takes to compute the reduced form.

\subsection{Software tricks}\label{ssec:sw}

\textbf{Database generation:} 
The calculation of the reduced cost-$k$ Clifford group set $\R_n^k$, as described in \ssec{database}, lends itself to parallel processing.  Specifically, each element of the set $\R_n^k$ can be calculated concurrently from its own data on its own processor.  The implementation considerations for this run-once parallel processing job depended on factors such as: 
\renewcommand{\theenumi}{\roman{enumi}}%
\begin{enumerate}
  \item the cost and availability of scaled-up/scaled-out hardware, and
  \item the cost-benefit for implementing, measuring, and tuning for different data-level parallel processing options, including shared memory versus distributed memory (e.g., OpenMP/MPI) and specialized processors (e.g., vector processors, GPUs, FPGAs),
\end{enumerate}
not to mention the multiple software options with each, from programming languages to libraries\cite{clang}.

Using Flynn's taxonomy \cite{flynn}, the \textit{Single Program, Multiple Data} (SPMD) \textit{streams} model was implemented using the C\texttt{++} \textit{concurrent-set} template class; specifically, each reduced cost-$k$ Clifford group set $\R_n^k$ is an instance of  \textascii{set<pair<uint64, uint64>>}.
This is a good choice for programmer productivity, i.e., letting the container's semantics deal with the requirements of maintaining distinct and efficiently-searchable elements of a multi-terabyte set on SMP hardware, in this case an Intel\textsuperscript{\textregistered} Xeon\textsuperscript{\textregistered} 128-CPU E7-4850 v4 @ 2.10GHz with 6TB RAM.

Runtime was extrapolated to take about $100$ days to complete the full database generation on a single machine, amounting to approximately $100{\cdot}24{\cdot}128 \,{=}\, 307{,}200$ CPU-hours that can be effectively divided among as many machines as there are available.  Hardware and software measurements during database generation, using performance analysis tools such as \textit{vmstat} to VTune\texttrademark, exposed heavy ``NUMA thrashing,'' i.e., soft page faults \cite{numathrashing}.  To alleviate this for the final half of the run, C's most basic systems programming mechanisms were more readily and easily used to replace the C\texttt{++} \textit{set} template in order to allocate, position, and search raw memory, resulting in a 5x speed-up; namely, \textit{malloc}, \textit{bsearch}, and \textit{qsort}, along with \textit{read/write} and \textit{uint128}.

\begin{table}[t]
\centering
\begin{minipage}[c]{0.85\linewidth}
\centering
\begin{tabular}{|r|r|r|r|r|r|} \hline
\backslashbox{$\cnotgate$ count}{Qubits} & 2 & 3 & 4 & 5 & 6  \\ \hline
0 & 1 & 1 & 1 & 1 & 1 \\
1 & 1 & 1 & 1 & 1 & 1 \\
2 & 1 & 3 & 4 & 4 & 4 \\ 
3 & 1 & 8 & 20 & 22 & 23 \\ 
4 && 10 & 112 & 183 & 198 \\ 
5 && 3 & 525 & 1,958 & 2,549 \\
6 && 1 & 1,230 & 22,257 & 42,883 \\
7 &&& 453 & 223,723 & 824,723 \\
8 &&& 16 & 1,441,124 & 16,086,167 \\
9 &&& 1 & 2,471,855 & 294,266,642 \\
10 &&&& 161,458 & 4,399,997,085 \\
11 &&&& 72 & 40,791,942,327 \\
12 &&&& 1 & 92,804,759,960 \\
13 &&&&& 5,666,221,415 \\
14 &&&&& 8,281 \\
15 &&&&& 3 \\ \hline
Total & 4 &	27 & 2,363 & 4,322,659 & 143,974,152,262 \\ \hline
\end{tabular}
\caption{The distribution of the number of equivalence classes across Clifford circuits over $2$, $3$, $4$, $5$, and $6$ qubits.}
\label{tab:2345}
\end{minipage}
\end{table}

\textbf{Synthesis of optimal circuits:}
With the one-time generation of the database complete and saved on secondary storage (Solid State Disk), similar systems programming mechanisms in C were exploited to optimize performance and scalability in order to read/search what is now effectively a lookup table (LUT), with the expensive runtime calculation of an optimal 6-qubit Clifford circuit completed and replaceable by a simple array indexing operation.  The database can be memory-mapped with \textit{mmap} \cite{mmap} for a greater degree of
\renewcommand{\theenumi}{\roman{enumi}}%
\begin{enumerate}
  \item programmer productivity, i.e., the database can be easily  referenced as memory using pointers, with no explicit file IO, and
  \item operational flexibility, i.e., the database can be effectively used by any type of hardware, ranging from a single laptop to a cluster of server-class machines, with scaling solely dependent on the choice of hardware, 
\end{enumerate}
all without changing the code; while the OS kernel and \textit{mmap} transparently and efficiently take care of
\renewcommand{\theenumi}{\roman{enumi}}%
\begin{enumerate}
  \item demand paging, and
  \item maintaining only a single copy of data in memory, as opposed to copies in both the file cache and user space. 
\end{enumerate}
In addition, to reduce the number of SSD queries, being the most time-consuming operation our search relies on, we employed the following strategy: 
\renewcommand{\theenumi}{\roman{enumi}}%
\begin{enumerate}
  \item we store the databases of Clifford circuits requiring 1--8, 14, and 15 gates in RAM, 
  \item we store an index consisting of each $1024^\text{th}$ element of Clifford unitaries implementable with 9--13 gates in RAM, and
  \item when the length-1024 chunk containing the desired element is found by the binary search, we make one long query to extract all 2048 64-bit integers in this chunk.
\end{enumerate}
The above modification limits the number of SSD queries required to synthesize an optimal circuit to at most 10 (at most two queries per searches over the gate counts of 9, 10, 11, 12, and 13) at the cost of RAM memory usage of $2.5$GB.

A machine with enough RAM to fit the entire database in will get the best performance as the complete database fills the file cache, and a machine with little-to-no available RAM will get the worst performance as every pointer access to a memory-mapped region (e.g., \textit{bsearch}) will touch the secondary storage.  A commodity machine with typical RAM sizes will get near-best performance as the ``hot'' parts of the database\textemdash{}the internal nodes of \textit{bsearch}\textemdash{}will tend to remain in the cache hierarchy (L1-L3, file cache) and result in minimal access to secondary storage. OS-specific parameters were not explored but can also be benchmarked and tuned independently of the database and code, including page sizes and pinned memory.

\section{Results}\label{sec:results} 

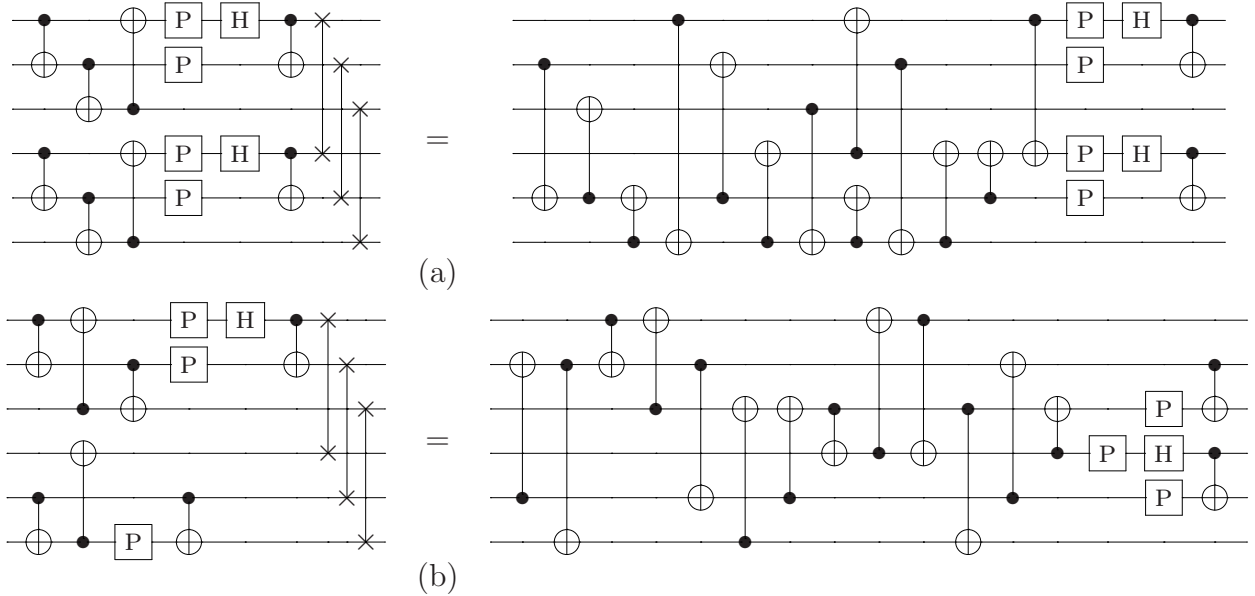
\begin{figure}[t]
\centering
\begin{minipage}[c]{0.85\linewidth}
\begin{center}
\begin{tabular}{ccc}
\Qcircuit @C=0.6em @R=0.3em @!R {
&\ctrl{1}&\qw       &\targ     &\gate{\pgate} &\gate{\hgate} &\ctrl{1} &\qswap    &\qw       &\qw        & \qw \\
&\targ   &\ctrl{1}  &\qw       &\gate{\pgate} &\qw           &\targ    &\qw\qwx   &\qswap    &\qw        & \qw \\
&\qw     &\targ     &\ctrl{-2} &\qw           &\qw           &\qw      &\qw\qwx   &\qw\qwx   &\qswap     & \qw \\
&\ctrl{1}&\qw       &\targ     &\gate{\pgate} &\gate{\hgate} &\ctrl{1} &\qswap\qwx&\qw\qwx   &\qw\qwx    & \qw \\
&\targ   &\ctrl{1}  &\qw       &\gate{\pgate} &\qw           &\targ    &\qw       &\qswap\qwx&\qw\qwx    & \qw \\
&\qw     &\targ     &\ctrl{-2} &\qw           &\qw           &\qw      &\qw       &\qw       &\qswap\qwx & \qw 
}
& \raisebox{-17mm}{=} &
\Qcircuit @C=0.6em @R=0.3em @!R {
&\qw     &\qw      &\qw      &\ctrl{5} &\qw      &\qw      &\qw     &\targ    &\qw     &\qw      &\qw      &\ctrl{3}&\gate{\pgate}&\gate{\hgate}&\ctrl{1}&\qw \\
&\ctrl{3}&\qw      &\qw      &\qw      &\targ    &\qw      &\qw     &\qw      &\ctrl{4}&\qw      &\qw      &\qw     &\gate{\pgate}&\qw          &\targ   &\qw \\
&\qw     &\targ    &\qw      &\qw      &\qw      &\qw      &\ctrl{3}&\qw      &\qw     &\qw      &\qw      &\qw     &\qw          &\qw          &\qw     &\qw \\
&\qw     &\qw      &\qw      &\qw      &\qw      &\targ    &\qw     &\ctrl{-3}&\qw     &\targ    &\targ    &\targ   &\gate{\pgate}&\gate{\hgate}&\ctrl{1}&\qw \\
&\targ   &\ctrl{-2}&\targ    &\qw      &\ctrl{-3}&\qw      &\qw     &\targ    &\qw     &\qw      &\ctrl{-1}&\qw     &\gate{\pgate}&\qw          &\targ   &\qw \\
&\qw     &\qw      &\ctrl{-1}&\targ    &\qw      &\ctrl{-2}&\targ   &\ctrl{-1}&\targ   &\ctrl{-2}&\qw      &\qw     &\qw          &\qw          &\qw     &\qw 
}
\\
& (a) & \\
\Qcircuit @C=0.6em @R=0.3em @!R {
&\ctrl{1}  &\targ       &\qw         &\gate{\pgate} &\gate{\hgate} &\ctrl{1}  &\qswap    &\qw       &\qw        & \qw \\
&\targ     &\qw  &\ctrl{1}          &\gate{\pgate} &\qw          &\targ     &\qw\qwx   &\qswap    &\qw        & \qw \\
&\qw       &\ctrl{-2}     &\targ     &\qw           &\qw           &\qw       &\qw\qwx   &\qw\qwx   &\qswap     & \qw \\
&\qw       &\targ  &\qw  &\qw &\qw     &\qw    &\qswap\qwx&\qw\qwx   &\qw\qwx    & \qw \\
&\ctrl{1}  &\qw       &\qw           &\ctrl{1}           &\qw           &\qw &\qw       &\qswap\qwx&\qw\qwx    & \qw \\
&\targ     &\ctrl{-2}     &\gate{\pgate}            &\targ &\qw        &\qw       &\qw       &\qw       &\qswap\qwx & \qw 
}
& \raisebox{-17mm}{=} &
\Qcircuit @C=0.6em @R=0.3em @!R {
& \qw & \qw & \ctrl{1} & \targ & \qw & \qw & \qw & \qw & \targ & \ctrl{3} & \qw & \qw & \qw & \qw & \qw & \qw & \qw \\
& \targ & \ctrl{4} & \targ & \qw & \ctrl{3} & \qw & \qw & \qw & \qw & \qw & \qw & \targ  & \qw & \qw & \qw & \ctrl{1} & \qw \\
& \qw & \qw & \qw & \ctrl{-2} & \qw & \targ & \targ & \ctrl{1} & \qw & \qw & \ctrl{3} & \qw & \targ & \qw & \gate{\pgate} & \targ & \qw \\
& \qw & \qw & \qw & \qw & \qw & \qw & \qw & \targ & \ctrl{-3} & \targ & \qw & \qw & \ctrl{-1} & \gate{\pgate} &\gate{\hgate} & \ctrl{1}  & \qw \\
& \ctrl{-3} & \qw & \qw & \qw & \targ & \qw & \ctrl{-2} & \qw & \qw & \qw & \qw & \ctrl{-3} & \qw & \qw & \gate{\pgate} & \targ & \qw \\
& \qw & \targ & \qw & \qw & \qw & \ctrl{-3} & \qw & \qw & \qw & \qw &\targ & \qw & \qw & \qw & \qw & \qw & \qw \\
} \\
& (b) & \\
\end{tabular}
\end{center}
\caption{All most expensive 6-qubit Clifford unitaries requiring $15$ entangling gates (up to left and right multiplication by the single-qubit gates and qubit relabeling).  (a) left: a compact representation in the form $(U \otimes U)\text{SWAP}$, right: its optimal implementation; (b) left: a compact representation in the form $(U' \otimes V')\text{SWAP}$, right: its optimal implementation.  Not illustrated is the cyclic SWAP of all $6$ qubits, that also requires $15$ entangling gates.} 
\label{fig:6q15g}
\end{minipage}
\end{figure}

The distribution of the number of equivalence classes across $\cnotgate$ gate costs is shown \tab{2345}.  For the number of qubits $2$ through $5$ the most complex function to implement is unique (within the equivalence class definition), and it is equivalent to a cyclic permutation of qubits.  For $n{=}6$, the cyclic permutation is one of three such functions; the other two are illustrated in \fig{6q15g}.  The small number of equivalence classes for a small number of qubits implies an efficient formula (based on $\mathsf{ReduceU}$) to compute the $\cnotgate$ cost of a small Clifford unitary.

\begin{table}[t]
\centering
\begin{minipage}[c]{0.85\linewidth}
\centering
\begin{tabular}{|r|r|} \hline
$\cnotgate$ cost & Number of $6$-qubit Clifford unitaries  \\ \hline
0 & 46,656 \\
1 & 6,298,560 \\
2 & 554,273,280 \\
3 & 39,045,473,280 \\
4 & 2,365,081,986,240 \\
5 & 126,526,140,927,360 \\
6 & 5,998,793,185,860,480 \\
7 & 249,378,588,704,827,008 \\
8 & 8,870,235,256,471,637,952 \\
9 & 255,646,483,904,239,690,752 \\
10 & 5,278,109,585,506,533,785,088 \\
11 & 58,697,087,161,047,579,538,560 \\
12 & 135,876,260,385,953,644,020,480 \\
13 & 7,998,401,853,543,422,302,848 \\
14 & 6,525,042,824,342,016 \\
15 & 13,308,157,440  \\ \hline
 & 208,114,637,736,580,743,168,000 \\ \hline
\end{tabular}
\caption{The distribution of the number of $6$-qubit Clifford unitaries across the entangling gate cost.}
\label{tab:6qdistribution}
\end{minipage}
\end{table}

We ran a script to calculate the distribution of the number of Clifford group elements across optimal $\cnotgate$ gate costs.  Given the database, it took a few days to collect the data using an HPC system.  This computation is highly parallelizable, and the runtime can be reduced significantly with many processors, e.g., GPUs; we have not pursued those reductions.  The results are reported in \tab{6qdistribution}.

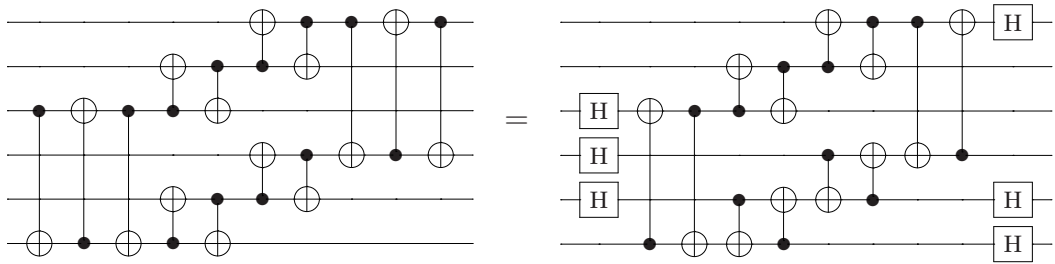
\begin{figure}[t]
\centering
\begin{minipage}[c]{0.85\linewidth}
\begin{center}
\begin{tabular}{ccc}
\Qcircuit @C=0.6em @R=0.6em @!R {
& \qw       & \qw       & \qw       & \qw       & \qw       & \targ       & \ctrl{1}  & \ctrl{3}  & \targ     & \ctrl{3} & \qw\\
& \qw       & \qw       & \qw       & \targ     & \ctrl{1}  & \ctrl{-1}   & \targ     & \qw       & \qw       & \qw & \qw\\
& \ctrl{3}  & \targ     & \ctrl{3}  & \ctrl{-1} & \targ     & \qw         & \qw       & \qw       & \qw       & \qw & \qw\\
& \qw       & \qw       & \qw       & \qw       & \qw       & \targ       & \ctrl{1}  & \targ     & \ctrl{-3} & \targ & \qw\\
& \qw       & \qw       & \qw       & \targ     & \ctrl{1}  & \ctrl{-1}   & \targ     & \qw       & \qw       & \qw & \qw\\
& \targ     & \ctrl{-3} & \targ     & \ctrl{-1} & \targ     & \qw         & \qw       & \qw       & \qw       & \qw  & \qw 
}
&
\raisebox{-14mm}{=}
&
\Qcircuit @C=0.6em @R=0.3em @!R {
& \qw       & \qw       & \qw       & \qw       & \qw       & \targ       & \ctrl{1}  & \ctrl{3}  & \targ     & \gate{\hgate} & \qw\\
& \qw       & \qw       & \qw       & \targ     & \ctrl{1}  & \ctrl{-1}   & \targ     & \qw       & \qw       & \qw & \qw\\
& \gate{\hgate}  & \targ     & \ctrl{3}  & \ctrl{-1} & \targ     & \qw         & \qw       & \qw       & \qw       & \qw & \qw\\
& \gate{\hgate}       & \qw       & \qw       & \qw       & \qw       & \ctrl{1}       & \targ  & \targ     & \ctrl{-3} & \qw & \qw\\
& \gate{\hgate}       & \qw       & \qw       & \ctrl{1}     & \targ  & \targ   & \ctrl{-1}     & \qw       & \qw       & \gate{\hgate} & \qw\\
& \qw     & \ctrl{-3} & \targ     & \targ & \ctrl{-1}    & \qw         & \qw       & \qw       & \qw       & \gate{\hgate}  & \qw
}
\end{tabular}
\end{center}
\caption{An optimal $\cnotgate$ gate circuit (left) can be implemented with fewer entangling gates as an optimal Clifford circuit (right).}
\label{fig:cnotadvantage}
\end{minipage}
\end{figure}

We used the database to look for examples of quantum Clifford advantage over classical reversible $\cnotgate$ circuits, meaning optimal $\cnotgate$ circuits that can be implemented with fewer entangling gates as a Clifford circuit.  We found one such example, illustrated in \fig{cnotadvantage}, that gives a reduction of $14$ gates into $12$, improving the $8$ to $7$ reduction seen earlier \cite{bravyi2020hadamard} $\left(\text{indeed, } \frac{14}{12}{>}\frac{8}{7}\right)$. 

The compiler was benchmarked using both consumer-grade and enterprise-grade systems for a test set with $10{,}000$ elements of the Clifford group $\C_6$.  Each element was generated by a Clifford circuit with $600$ randomly chosen gates over the library $\{\hgate, \pgate,  \cnotgate\}$.  The number of gates was selected to be high enough to effect a close to random uniform distribution over the elements of the group $\C_6$.  We observed that such random test set is dominated by the elements with costs $11$ and $12$.  The compiler runtime reported below is the time required to obtain optimal circuits for all test set elements divided by the size of the test set.  We observed the runtime of $\laptophottime$ seconds for a laptop with Intel\textsuperscript{\textregistered} i7-1068NG7 2.3GHz CPU and 16GB RAM with USB-C-attached consumer-grade SSD.  The search relies on the database stored on SSD, and a $2.5$GB index in RAM, see \ssec{sw} for details. The time reported measures hot cache performance, cold cache performance reads $0.003708$ seconds per an optimal circuit, on average.  The compiler performance improves when the entire database can be stored in RAM.  We observed the hot cache runtime of approximately $\serverhottime$ seconds for a server with Intel\textsuperscript{\textregistered} Xeon\textsuperscript{\textregistered} 128-CPU E7-4850 v4 @ 2.10GHz and 6TB RAM.  The process of loading the full database into RAM took approximately 2 hours. 

This performance allows to use our implementation to obtain individual circuits and entire randomized benchmarking schedules in mere seconds using consumer-grade hardware as well as online via a web interface.  For the use in demanding applications such as peep-hole optimization of large circuits, we suggest relying on large-RAM commercial-grade servers and note that it takes roughly half the time to look up the cost without computing the optimal circuit (the procedure that would likely get called most frequently during peep-holing).

The average runtime of our compiler for random $n$-qubit Clifford operators with $n\,{\le}\,5$ is shown in \tab{runtime_small}. 

\begin{table}[t]
\centering
\begin{minipage}[c]{0.85\linewidth}
\centering
\begin{tabular}{|r|r|r|} 
\hline
Qubits & Average runtime (seconds) & Database size (bytes) \\
\hline
$n\,{=}\,5$ & 0.0002922 & 69,162,544 \\
$n\,{=}\,4$ & 0.0001928 & 37,808 \\
$n\,{=}\,3$ & 0.0001351 & 432 \\
$n\,{=}\,2$ & 0.00007968 & 64 \\
\hline
\end{tabular}
\caption{Average runtime for optimally compiling $n$-qubit Clifford operators with the full database of reduced elements loaded into RAM. The runtime was measured on MacBook Pro laptop (early 2015 model) with Intel\textsuperscript{\textregistered} i7-5557U 3.1GHz CPU and 16GB RAM.}
\label{tab:runtime_small}
\end{minipage}
\end{table}


\subsection{Optimal 2-designs}
\label{sec:2design}

Unitary designs~\cite{low2010pseudo} are probability distributions on the unitary group that reproduce low-order moments of the Haar (uniform) distribution.  Of particular interest are unitary designs that can be efficiently implemented by quantum circuits~\cite{cleve2015near}.  Such designs can serve as a substitute for the Haar distribution in certain randomized quantum protocols such as data hiding~\cite{divincenzo2002quantum}, estimating fidelity of quantum operations~\cite{magesan2011scalable, emerson2005scalable}, and quantum state tomography~\cite{huang2020predicting}.  In this section, we leverage the database of reduced Clifford elements to  construct optimal unitary designs that have the minimum average cost, subject to the constraint that all elements of the design are Clifford operators.

Let $U(2^n)$ be the group of unitary complex matrices of size $2^n{\times} 2^n$. Suppose  $\D \,{\subseteq}\, U(2^n)$ is a finite subset and $\mu{:} \, \D\,{\to}\, \RR_+$ is a probability distribution on $\D$.  The pair $(\D,\mu)$ is  called a unitary $2$-design~\cite{dankert2009exact} if
\be
\label{2design_def1}
\sum_{\hat{U}\in \D} \mu(\hat{U})
(\hat{U}^\dag \hat{A} \hat{U})\otimes (\hat{U}^\dag \hat{B} \hat{U})
=\int_{U(2^n)}  (\hat{U}^\dag \hat{A} \hat{U})\otimes (\hat{U}^\dag \hat{B} \hat{U}) dU
\ee
for any complex matrices $\hat{A}$ and $\hat{B}$. Here the tensor product separates two $n$-qubit registers and the integral in the right-hand side of Eq.~(\ref{2design_def1}) is the average over the Haar distribution on the unitary group $U(2^n)$.  We reserve the hat notation for complex unitary matrices to avoid confusion with binary symplectic matrices considered in the rest of the paper.  Below we choose $\D$ to be the $n$-qubit Clifford group and construct a probability distribution $\mu$ that minimizes the average cost
\be
\label{cost(mu)}
\sum_{\hat{U}\in \D} \mu(\hat{U}) \cdot \mathrm{cost}(\hat{U}),
\ee
subject to the constraint that $(\D,\mu)$ is a unitary $2$-design.  Here $\mathrm{cost}(\hat{U})$ is the minimum number of the $\cnotgate$ gates required to implement $\hat{U}$ by a quantum circuit composed of the Hadamard, Phase, and $\cnotgate$ gates. 

Since Pauli operators have zero cost, we can assume wlog that the optimal solution $\mu$ is Pauli-invariant, i.e., $\mu(\hat{U})\,{=}\,\mu(\hat{U}\hat{O})$ for all $n$-qubit Pauli operators $\hat{O}$.  As discussed in \sec{def}, the unitary version of the $n$-qubit Clifford group is isomorphic to $\C_n\,{\times}\, \{I,X,Y,Z\}^n$. Here we ignore the overall phase factors.  Define the probability distribution $\pi{:} \, \C_n\,{\to}\, \RR_+$ such that $\pi(U)\,{=}\,4^n \mu(U{\times} P)$ for all $U\,{\in}\, \C_n$ and $P\,{\in}\, \{I,X,Y,Z\}^n$.  The distribution $\pi$ is well-defined whenever $\mu$ is Pauli-invariant. In Appendix~\ref{app:B} we show that $\mu$ is a Clifford $2$-design iff $\pi$ obeys the so-called Pauli mixing constraint~\cite{cleve2015near}
\be
\label{pauli_mixing}
\mathrm{Pr}_{U\sim \pi}[Ux{=}y]:= 
\longsum[23]_{U\in \C_n : \, Ux=y} \; \pi(U) = \frac1{4^n-1} \quad \mbox{for all non-zero vectors $x,y \in \{0,1\}^{2n}$}.
\ee 
Furthermore, $\mu$ has the average cost
\be
\label{cost(pi)}
\sum_{U\in \C_n} \pi(U) \cdot \mathrm{cost}(U).
\ee
Thus it suffices to minimize the average cost Eq.~(\ref{cost(pi)}) over variables $\pi(U)\,{\ge}\, 0$ subject to the normalization constraint $\sum_{U\in \C_n} \pi(U)\,{=}\,1$ and the Pauli mixing constraint, Eq.~(\ref{pauli_mixing}).  This gives a linear program with $|\C_n|$ variables.

The next step is to reduce the number of variables and the number of constraints in the linear program.  Suppose $\pi$ is a Pauli mixing distribution on $\C_n$, that is, $\pi$ obeys Eq.~(\ref{pauli_mixing}). Define a symmetrized version of $\pi$ as follows. First, sample $U\,{\in}\, \C_n$ from the distribution $\pi$.  Second, sample $W\,{\in}\,S_n$ and $L,R\,{\in}\, \C_n^0$ from the uniform distribution on the respective groups. Finally, output $U'\,{=}\,LW^{-1}U WR$.  The probability distribution of $U'$ is given by
\[
\pi'(U')=\frac1{6^{2n} n!} \longsum[8]_{L,R\in \C_n^0} \; \sum_{W\in S_n} \pi(WL^{-1} U'R^{-1} W^{-1}).
\]
Since the cost is invariant under a qubit relabeling and left/right multiplications by the elements of local subgroup $\C_n^0$, the distributions $\pi$ and $\pi'$ have the same average cost.  We claim that $\pi'$ is Pauli mixing.  Indeed, pick any non-zero vectors $x,y\,{\in}\, \{0,1\}^{2n}$, a qubit permutation $W\,{\in}\, S_n$, and local Cliffords $L,R\,{\in}\, \C_n^0$.  Then
\be
\label{pauli_mixing1}
\mathrm{Pr}_{U\sim \pi}[LW^{-1}UWRx\,{=}\,y]=\mathrm{Pr}_{U\sim \pi}[Ux'\,{=}\,y'] = \frac1{4^n-1},
\ee
where $x'\,{=}\,WRx\,{\ne}\, 0$ and $y'\,{=}\,WL^{-1} y\,{\ne}\, 0$. The last equality in Eq.~(\ref{pauli_mixing1}) follows from the assumption that $\pi$ is Pauli mixing.  Thus $\pi'$ is a convex linear combination of Pauli mixing distributions, that is, $\pi'$ itself is Pauli mixing.
 
The above shows that an optimal Clifford $2$-design can be found by minimizing the average cost Eq.~(\ref{cost(pi)}) over symmetric Pauli mixing distributions $\pi$  such that the probability $\pi(U)$ depends  only on the equivalence class $[U]$ that contains $U$. Such distribution $\pi$ can be compactly specified by considering the set of reduced elements
\[
\R_n:= \{ \mathsf{ReduceU}(U): \, U\,{\in}\, \C_n\}.
\] 
Given a reduced element $U\,{\in}\, \R_n$, define the probability distribution
\[
\eta(U) = \longsum[6]_{U'\in [U]} \pi(U') =\pi(U)\cdot | [U] |.
\]
Note that $\eta$ is a probability distribution on $\R_n$
since each equivalence class $[U]$ contains a unique reduced element, see \ssec{reduceu}.  For brevity, we will refer to $\eta$ as a reduced distribution.  The average cost of the original distribution $\pi$ depends only on $\eta$ and can be computed using the formula
\be
\label{cost(eta)}
\longsum[5]_{U\in \R_n} \eta(U) \cdot \mathrm{cost}(U).
\ee
It remains to express the Pauli mixing constraint in terms of the reduced distribution $\eta$. Given a reduced element $U\,{\in}\, \R_n$ and non-zero vectors $x,y\,{\in}\, \{0,1\}^{2n}$, define the quantity
\[
g(U,x,y) = \frac{\#\{U'\in [U]\, : \, U'x=y\}}{|[U]|}. 
\]
In words, $g(U,x,y)$ is the probability that a random uniformly distributed element of the equivalence class $[U]$ maps $x$ to $y$.  Then $\pi$ is Pauli mixing iff
\be
\label{pauli_mixing2}
\longsum[5]_{U\in \R_n} \eta(U) g(U,x,y)= \frac1{4^n-1}
\ee
for all non-zero vectors $x,y \in \{0,1\}^{2n}$.  It remains to note that some constraints Eq.~(\ref{pauli_mixing2}) are redundant.  Indeed, since the equivalence class $[U]$ is invariant under the left/right multiplications of $U$ by the elements of the local subgroup $\C_n^0$, one has $g(U,x,y)\,{=}\,g(U,Lx,Ry)$ for all $L,R\in \C_n^0$.  Suppose $(x_j,x_{n+j})\,{\ne}\, (0,0)$ for some qubit $j$.  Then one can choose $L\,{\in}\, \C_n^0$ acting non-trivially only on the $j$-th qubit such that $(Lx)_j\,{=}\,0$ and $(Lx)_{n+j}\,{=}\,1$,  see \ssec{reduceu}.  Applying this transformation to all qubits we conclude that the Pauli mixing constraint Eq.~(\ref{pauli_mixing2}) has to be imposed only for vectors
\be
\label{pauli_mixing3}
x,y\in \{(0^n z) : \, z\in \{0,1\}^n{\setminus} 0^n\}.
\ee
Minimizing the average cost Eq.~(\ref{cost(eta)}) over variables $\eta(U)\,{\ge}\, 0$ with $U\,{\in}\, \R_n$, subject to the normalization $\sum_{U\in \R_n} \eta(U)\,{=}\,1$ and the Pauli mixing constraints Eqs.~(\ref{pauli_mixing2},\ref{pauli_mixing3}), gives a linear program with $|\R_n|$ variables and $1\,{+}\,(2^n{-}1)^2$ equality constraints.  We were able to find an optimal solution of this linear program numerically for $n\,{=}\,2,3,4$ qubits.  The optimal reduced distributions $\eta$ presented in \tab{n2d}, \tab{n3d}, and \tab{n4d} are compactly represented by a list of reduced elements $U_1,U_2,\ldots,U_m \in \R_n$ along with their probabilities $\eta(U_j)$.  Only reduced elements that appear with non-zero probability are shown.  The tables display an optimal circuit implementation of each reduced element $U_j$. To avoid clutter, we omit single-qubit gates on the left and on the right.  The actual $2$-design has the form $LW^{-1}U_jW R$, where the index  $j\in \{1,2,\ldots,m\}$ is sampled with the probability $\eta(U_j)$, the qubit permutation $W$ is sampled uniformly from $S_n$, and $L,R$ are sampled uniformly from the local subgroup $C_n^0$.

\begin{table}
\captionsetup{width=17cm}
\begin{center}
\begin{tabular}{r|c}
\hline
circuit $U_j$ & probability $\eta(U_j)$  \\
\hline
\rule{0pt}{4ex} 
$
\Qcircuit @C=.7em @R=.7em {
& \ctrl{1} & \qw \\
& \targ &  \qw \\
}
$ 
& $0.6$ \rule[-5ex]{0pt}{4ex}\\
\hline
\rule{0pt}{4ex} 
$
\Qcircuit @C=.7em @R=.7em {
& \ctrl{1} & \targ & \qw \\
& \targ & \ctrl{-1} & \qw \\
}
$ 
& $0.3$ \rule[-5ex]{0pt}{4ex}\\
\hline
\rule{0pt}{4ex} 
$
\Qcircuit @C=.7em @R=.7em {
& \ctrl{1} & \targ & \ctrl{1} & \qw \\
& \targ & \ctrl{-1} & \targ & \qw \\
}
$ 
& $0.1$ \rule[-5ex]{0pt}{4ex}\\
\hline
\end{tabular}
\caption{Optimal two-qubit Clifford $2$-design with
the average cost $1.5$. This coincides with the average 
cost of the full Clifford group $\C_2$.}
\label{tab:n2d}
\end{center}
\end{table}

\begin{table}
\captionsetup{width=17cm}
\begin{center}
\begin{tabular}{r|c||r|c}
\hline
circuit $U_j$ & probability $\eta(U_j)$ & circuit $U_j$ & probability $\eta(U_j)$   \\
\hline
\rule{0pt}{4ex} 
$
\Qcircuit @C=.7em @R=.7em @!{
& \targ &\ctrl{1} & \qw \\
& \ctrl{-1} & \targ &  \qw \\
& \qw & \qw & \qw \\
}
$ 
& $0.074175$ &
$
\Qcircuit @C=.7em @R=.7em @!{
& \targ & \ctrl{2} & \ctrl{1} & \targ & \qw \\
& \qw & \qw & \targ & \ctrl{-1} & \qw \\
& \ctrl{-2} & \targ & \qw & \qw & \qw \\
}
$
& $0.098901$  \rule[-9ex]{0pt}{4ex}
\\
\hline
\rule{0pt}{4ex}
$
\Qcircuit @C=.7em @R=.7em @!{
& \targ & \ctrl{1} & \targ & \qw \\
& \ctrl{-1} & \targ & \ctrl{-1} & \qw \\
& \qw & \qw & \qw & \qw \\
}
$ 
& $0.035715$ &
$
\Qcircuit @C=.7em @R=.7em @!{
& \targ & \ctrl{1} & \ctrl{2} & \targ & \qw \\
& \ctrl{-1} & \targ & \qw & \ctrl{-1} & \qw \\
& \qw & \qw & \targ & \qw & \qw \\
}
$
& $0.098901$  \rule[-9ex]{0pt}{4ex}
\\
\hline
\rule{0pt}{4ex}
$
\Qcircuit @C=.7em @R=.7em @!{
& \targ & \qw & \ctrl{1} & \qw \\
& \qw  & \ctrl{1} & \targ & \qw \\
& \ctrl{-2} & \targ & \qw & \qw \\
}
$ 
& $0.692309$ 
& &  \rule[-9ex]{0pt}{4ex}
\\
\hline
\end{tabular}
\caption{Optimal three-qubit Clifford $2$-design with
the average cost $3.12363...$. For comparison, the full
Clifford group $\C_3$ has the average 
cost $3.50937...$.}
\label{tab:n3d}
\end{center}
\end{table}

\begin{table}
\captionsetup{width=17cm}
\begin{center}
\begin{tabular}{r|c||r|c}
\hline
circuit $U_j$ & $\eta(U_j)$ & circuit $U_j$ & $\eta(U_j)$   \\
\hline
\rule{0pt}{4ex} 
$
\Qcircuit @C=.4em @R=.4em @!{
& \targ & \ctrl{2} & \gate{\pgate} & \targ & \ctrl{1} & \gate{\pgate} &\targ & \qw \\
& \ctrl{-1} & \qw & \qw & \qw & \targ & \qw & \qw & \qw \\
& \qw & \targ & \qw & \qw & \qw & \qw & \ctrl{-2} & \qw \\
& \qw & \qw & \qw &  \ctrl{-3} & \qw & \qw & \qw & \qw
}
$ 
& $0.141176$ & 
$
\Qcircuit @C=.7em @R=.7em @!{
&  \ctrl{1} & \targ & \ctrl{1} & \qw \\
& \targ & \ctrl{-1} & \targ & \qw \\
& \ctrl{1} & \targ & \qw & \qw \\
& \targ & \ctrl{-1} & \qw & \qw \\
}
$
& $0.023663$ 
\rule[-13ex]{0pt}{4ex} \\
\hline
\rule{0pt}{4ex} 
$
\Qcircuit @C=.4em @R=.4em @!{
& \ctrl{1} & \targ & \gate{\pgate} & \qw & \qw & \targ & \qw \\
& \targ & \qw & \gate{\pgate} & \targ & \qw & \ctrl{-1} & \qw \\
& \qw & \qw & \qw & \ctrl{-1} & \gate{\hgate} & \ctrl{1} & \qw \\
& \qw & \ctrl{-3} & \qw & \qw & \qw & \targ & \qw
}
$
& $0.009893$ & 
$
\Qcircuit @C=.4em @R=.4em @!{
& \targ & \ctrl{1} & \gate{\pgate} & \gate{\hgate} & \qw & \ctrl{2} & \qw \\
& \qw & \targ & \qw & \ctrl{1} & \qw & \qw & \qw \\
& \qw & \ctrl{1} & \qw & \targ & \gate{\pgate} & \targ & \qw \\
& \ctrl{-3} & \targ & \qw & \qw & \qw & \qw & \qw \\
}
$
& $0.013368$ 
\rule[-14ex]{0pt}{4ex} \\
\hline
\rule{0pt}{4ex} 
$
\Qcircuit @C=.4em @R=.4em @!{
& \qw & \targ & \ctrl{1} & \gate{\pgate} & \gate{\hgate} & \qw & \ctrl{2} & \qw \\
& \qw & \qw & \targ & \qw & \ctrl{1} & \qw & \qw & \qw \\
& \ctrl{1} & \qw & \qw & \qw & \targ & \gate{\pgate} & \targ & \qw \\
& \targ & \ctrl{-3} & \qw & \qw & \qw & \qw & \qw & \qw
}
$ 
& $0.146526$ & 
$
\Qcircuit @C=.4em @R=.4em @!{
& \ctrl{2} & \qw & \targ & \gate{\pgate} & \gate{\hgate} & \qw & \ctrl{1} & \qw \\
& \qw & \ctrl{2} & \qw & \qw & \targ & \gate{\pgate} & \targ & \qw \\
& \targ & \qw & \qw & \qw & \ctrl{-1} & \qw & \qw & \qw \\
& \qw & \targ & \ctrl{-3} & \qw & \qw & \qw & \qw & \qw
}
$
& $0.014572$ 
\rule[-13ex]{0pt}{4ex} \\
\hline
\rule{0pt}{4ex} 
$
\Qcircuit @C=.4em @R=.4em @!{
& \ctrl{1} & \gate{\hgate} & \ctrl{3} & \gate{\pgate} & \qw & \targ & \qw & \qw  \\
& \targ & \qw & \qw & \gate{\pgate} & \ctrl{1} & \qw & \targ & \qw  \\
& \qw & \qw & \qw & \qw & \targ & \ctrl{-2} & \qw & \qw \\
& \qw & \qw & \targ & \qw & \qw & \qw & \ctrl{-2} & \qw 
}
$
& $0.164572$  & 
$
\Qcircuit @C=.7em @R=.7em @!{
& \ctrl{1} & \qw & \targ & \qw & \ctrl{3} & \qw \\
& \targ & \ctrl{1} & \qw & \qw & \qw & \qw \\
& \qw & \targ & \ctrl{-2} & \targ & \qw & \qw \\
& \qw & \qw & \qw & \ctrl{-1} & \targ & \qw 
}
$
& $0.206952$ 
\rule[-13ex]{0pt}{4ex} \\
\hline
\rule{0pt}{4ex} 
$
\Qcircuit @C=.4em @R=.4em @!{
& \targ & \qw & \qw & \qw & \qw & \qw & \ctrl{3} & \qw & \qw \\
& \qw & \ctrl{1} & \qw & \qw & \qw & \qw & \qw & \targ & \qw \\
& \ctrl{-2} & \targ & \gate{\pgate} & \gate{\hgate} & \ctrl{1} & \qw & \qw & \qw & \qw \\
& \qw & \qw & \qw & \qw & \targ & \gate{\pgate} & \targ & \ctrl{-2} & \qw 
}
$
& $0.198930$  & 
$
\Qcircuit @C=.7em @R=.7em @!{
& \ctrl{1} & \targ & \ctrl{1} & \qw \\
& \targ & \ctrl{-1} & \targ & \qw \\
& \ctrl{1} & \targ & \ctrl{1} & \qw \\
& \targ & \ctrl{-1} & \targ & \qw 
}
$
&  $0.007353$ 
\rule[-13ex]{0pt}{4ex} \\
\hline
\rule{0pt}{4ex} 
$
\Qcircuit @C=.7em @R=.7em @!{
& \ctrl{3} & \targ & \qw & \qw & \ctrl{2} & \qw & \qw \\
& \qw & \qw & \ctrl{1} & \qw & \qw & \targ & \qw \\
& \qw & \qw & \targ & \ctrl{1} & \targ & \ctrl{-1} & \qw \\
& \targ & \ctrl{-3} & \qw & \targ & \qw & \qw & \qw  
}
$ 
& $0.072994$  &  & 
\rule[-13ex]{0pt}{4ex} \\
\hline
\end{tabular}
\caption{Optimal four-qubit Clifford $2$-design with the average cost $5.08034...$. For comparison, the full Clifford group $\C_4$ has the average cost $5.85856...$. We note that all except for two
circuits in the above table have cost $5$. The remaining pair of circuits have cost $6$.}
\label{tab:n4d}
\end{center}
\end{table}

\subsection{Comparison to prior work}

Similar-spirited prior work includes the synthesis of 4-qubit optimal Clifford circuits \cite{kliuchnikov2013optimization}, the synthesis of 4-bit optimal reversible circuits \cite{golubitsky2011study}, and optimal solution of Rubik's cube puzzle \cite{rokicki2014diameter}.  \cite{kliuchnikov2013optimization} is most closely related to our work, given the focus on Clifford circuits; the difference is we chose to study the two-qubit gate cost, which better reflects the constraints of the existing quantum computers than the total gate count.  The search space size comparison is $4.7 {\cdot} 10^{10}$ in \cite{kliuchnikov2013optimization} to $2.1 {\cdot} 10^{23}$ in our work---an almost $13$ orders of magnitude difference.  \cite{golubitsky2011study} study reversible circuits, being a highly relevant type of computations. Their search space size is $2.1 {\cdot} 10^{13}$, meaning we solved a problem with $10$ orders of magnitude higher search space size.  Finally, \cite{rokicki2014diameter} studies Rubik's cube, which is also a finite group.  Their search space size is $4.3{\cdot} 10^{19}$, meaning ours is almost $4$ orders of magnitude higher.

\section{Conclusion}
In this paper, we reported algorithms and their C\texttt{++} implementation that compute all two-qubit gate count optimal 6-qubit Clifford circuits.  There are about $2.1 {\cdot} 10^{23}$ different Clifford functions.  The large search space required us to employ server-class machines to make the computation possible.  In particular, we used HPC to break down the set of canonical representatives of Clifford group elements sharing similar optimal circuit structure, and store them in a database of size $2.1$TB.  Given this database on an SSD and a $2.5$GB index file in RAM, the time to extract an optimal circuit using a consumer-grade laptop is $\laptophottime$ seconds---$10$ times faster than the typical access time for a spindle drive.  The time to extract an optimal circuit using an enterprise-level system while storing the database in RAM is $\serverhottime$ seconds---$15$ times faster than the typical HDD access time.  We used the database to establish the maximal gate count needed to implement an arbitrary 6-qubit Clifford unitary and showed the distribution of the number of Clifford functions across their required gate counts.  We established a new example of quantum advantage by Clifford circuits over $\cnotgate$ gate circuits and found optimal Clifford 2-designs for the number of qubits up to, and including, $4$.

\section*{Data availability}
A Python implementation of the described algorithms will be available at:\\ \url{https://github.com/qiskit-community/prototype-clifford-optimizer}.

\bibliographystyle{unsrt}

\appendix
\section{Proof of \lem{gen1}}\label{appe:A}
We need to show that any element $U\,{\in}\,\C_n^k$ can be written as $U=G_{a_1} G_{a_2} \cdots G_{a_k} L$ for some $L\,{\in}\,\C_n^0$ and some $k$-tuple of generators.  We use the induction in $k$.  The base of induction is $k{=}0$, in which case the statement is trivial. Suppose $k{\ge}1$ and  $U\,{\in}\,\C_n^k$.  By definition, $U$ can be implemented by a circuit composed of $k$ $\cnotgate$ gates and some number of single-qubit gates.  Let $\cnotgate_{i,j}$ be the last $\cnotgate$ gate in this circuit. Then 
\[
U=M \cnotgate_{i,j} V
\]
for some $M\,{\in}\, \C_n^0$ and $V\,{\in}\, \C_n^{k-1}$.  
We can assume without loss of generality that $i<j$.
Indeed, if $i>j$, use the identity $\cnotgate_{j,i}=\hgate_i \hgate_j \cnotgate_{i,j} \hgate_i \hgate_j$ to flip the control and the target qubits of the last $\cnotgate$ gate. The extra $\hgate$ gates can be absorbed into $M$ and $V$ layers.
By the induction hypothesis, $V=G_{a_2} \cdots G_{a_k} L$ for some $L\,{\in}\, \C_n^0$.  Furthermore, we can assume without loss of generality that $M\,{=}\,\Gate{a}_i \Gate{b}_j$ for some $\Gate{a},\Gate{b}\,{\in}\, \C_1$.  Indeed, all single-qubit gates in $M$ that act on qubits $\ell \,{\notin}\, \{i,j\}$ can be commuted through $\cnotgate_{i,j}$ and absorbed into $V$. If $\Gate{a},\Gate{b}\,{\in}\, \{\idgate,\hgate\pgate,\pgate\hgate\}$, we are done.  Indeed, in this case $\Gate{a}_i \Gate{b}_j\cnotgate_{i,j}\,{=}\,G_{a_1}$ is a generator and $U=G_{a_1} V = G_{a_1}G_{a_2} \cdots G_{a_k} L$ with $L\,{\in}\, \C_n^0$. Otherwise, transform  $\Gate{a}$ and $\Gate{b}$ into the desired form by ``borrowing" the missing single-qubit gates from $V$ and commuting them through $\cnotgate_{i,j}$ using the  Clifford group identities\footnote{Recall that these identities only apply to elements of the binary symplectic group; the corresponding identities for unitary Clifford operators may include some extra phase factors and Pauli gates.}
: 
$$\pgate^2=\hgate^2=(\pgate\hgate\pgate)^2=\idgate, \quad \pgate\hgate\pgate=\hgate\pgate\hgate, $$
$$
\pgate_i \cnotgate_{i,j} = \cnotgate_{i,j} \pgate_i, \quad
\hgate_i \cnotgate_{i,j} = (\hgate\pgate)_i \cnotgate_{i,j} \pgate_i, \quad
(\pgate\hgate\pgate)_i \cnotgate_{i,j} = (\pgate\hgate)_i \cnotgate_{i,j} \pgate_i,$$
$$
\pgate_j \cnotgate_{i,j} = (\hgate\pgate)_j\cnotgate_{i,j} (\pgate\hgate\pgate)_j, 
\hgate_j \cnotgate_{i,j} = (\pgate\hgate)_j \cnotgate_{i,j} (\pgate\hgate\pgate)_j, \text{ and }
(\pgate\hgate\pgate)_j \cnotgate_{i,j} = \cnotgate_{i,j} (\pgate\hgate\pgate)_j. 
$$
This completes the proof.

\section{Pauli mixing constraint}
\label{app:B}

In this section we prove that a Pauli-invariant probability distribution $\mu$ on the $n$-qubit Clifford group is a unitary $2$-design iff $\mu$ is Pauli mixing.  The fact that Pauli-invariance and Pauli mixing are sufficient for being a $2$-design is known \cite[Appendix D]{cleve2015near}.  Thus it suffices to prove that any Pauli-invariant Clifford $2$-design is Pauli mixing.

The Haar integeral in Eq.~(\ref{2design_def1}) can be computed explicitly using Weingarten functions~\cite{collins2006integration},
\[
\int_{U(2^n)}  (\hat{U}^\dag \hat{A} \hat{U})\otimes (\hat{U}^\dag \hat{B} \hat{U}) dU
=\mathrm{SWAP} \left[ \frac{\mathrm{Tr}(\hat{A}\hat{B})}{4^n-1} - \frac{\mathrm{Tr}(\hat{A})\mathrm{Tr}(\hat{B})}{2^n(4^n-1)}\right]
+ \hat{I}\otimes \hat{I} \left[ 
\frac{\mathrm{Tr}(\hat{A})\mathrm{Tr}(\hat{B})}{4^n-1} - \frac{\mathrm{Tr}(\hat{A}\hat{B})}{2^n(4^n-1)}\right].
\]
Here $\mathrm{SWAP}$ is a unitary operator that swaps the two $n$-qubit registers separated by the tensor product.  It is well-known that any complex matrix of size $2^n{\times} 2^n$ can be expanded in the Pauli basis
\[
\PL_n=\{\hat{I},\hat{X},\hat{Y},\hat{Z}\}^{\otimes n}.
\]
Thus it suffices
to impose  Eq.~(\ref{2design_def1}) only for $\hat{A},\hat{B}\in \PL_n$.
Noting that the Pauli basis is orthonormal with respect to the inner product
$\mathrm{Tr}(\hat{A}^\dag \hat{B})/2^n$ one concludes that a pair $(\D,\mu)$ is 
a unitary $2$-design iff
\be
\label{2design_def2}
\sum_{\hat{U}\in \D} \mu(\hat{U})
(\hat{U}^\dag \hat{A} \hat{U})\otimes (\hat{U}^\dag \hat{B} \hat{U})
= \left\{ \begin{array}{rcl}
0 & \mbox{if} & \hat{A}\ne \hat{B}, \\
\hat{\Lambda}  & \mbox{if} &  \hat{A}= \hat{B}\ne \hat{I}\\
\end{array}\right. \quad \mbox{for all $\hat{A},\hat{B}\in  \PL_n$}
\ee
where
\[
\hat{\Lambda} = 
\frac1{4^n-1} \;(2^n \mathrm{SWAP} -  \hat{I}\,{\otimes}\, \hat{I})=
\frac1{4^n-1} \; \longsum[15]_{\hat{O}\in \PL_n\setminus \{\hat{I}\}} \; \hat{O}\otimes \hat{O}.
\]

A Pauli operator $\hat{O}\,{\in}\, \PL_n$ can be parameterized by a bit string $v\,{\in}\, \{0,1\}^{2n}$ such that
\[
\hat{O}(v) \equiv \hat{O}(v_1v_{n+1})\otimes \hat{O}(v_2 v_{n+2}) \otimes \cdots \otimes \hat{O}(v_n v_{2n}),
\]
where $\hat{O}(00)\,{\equiv}\, \hat{I}$, $\hat{O}(10)\,{\equiv}\, \hat{X}$, $\hat{O}(01)\,{\equiv}\, \hat{Z}$, and $\hat{O}(11)\,{\equiv}\, \hat{Y}$.  The unitary version of the Clifford group, which we denote $\CU_n$, is a group of complex matrices $\hat{U}\,{\in}\,U(2^n)$ that map Pauli operators to Pauli operators under conjugation. More formally, $\hat{U}\in \CU_n$ iff there exists a symplectic matrix $U\,{\in}\,\C_n$ such that
\be
\label{action_on_pauli}
\hat{U}\hat{O}(v)\hat{U}^\dag =\pm \hat{O}(Uv)
\ee
for all $v\,{\in}\, \{0,1\}^{2n}$. Here the sign may depend on $v$.  The symplectic matrix $U\,{\in}\, \C_n$ in Eq.~(\ref{action_on_pauli}) is uniquely determined by $\hat{U}$.  Conversely, $\hat{U}$ is uniquely determined by $U$ up to (right) multiplications by Pauli operators and the overall phase. In other words, $\CU_n$ is isomorphic (as a set) to $\C_n {\times} \PL_n$ if one ignores the overall phase of unitary matrices.

Suppose $\mu{:} \, \CU_n\,{\to}\, \RR_+$ is a Pauli-invariant probability distribution, that is, $\mu(\hat{U})\,{=}\,\mu(\hat{U}\hat{O})$ for all $\hat{O}\,{\in}\, \PL_n$ and $\hat{U}\,{\in}\, \CU_n$.  Using the isomorphism $\CU_n\,{\cong}\, \C_n {\times} \PL_n$, define a distribution $\pi{:} \, \C_n\,{\to}\, \RR_+$ such that $\mu(U{\times}P)=\pi(U)/4^n$ for all $U\,{\in}\, C_n$ and $P\,{\in}\, \PL_n$.  Suppose $(\CU_n,\mu)$ is a  $2$-design, that is, $\mu$ obeys Eq.~(\ref{2design_def2}) with ${\cal D}=\CU_n$.  Consider the second case of Eq.~(\ref{2design_def2}) such that $\hat{A}\,{=}\,\hat{B}\,{=}\,\hat{O}(x)$ for some non-zero vector $x\,{\in}\, \{0,1\}^{2n}$.  Then it is equivalent to
\[
\sum_{U\in \C_n} \pi(U)
\hat{O}(Ux) \otimes \hat{O}(Ux)
= 
\frac1{4^n-1} \; \longsum[26]_{y \in \{0,1\}^{2n}\setminus 0^{2n}} \; \hat{O}(y)\otimes \hat{O}(y).
\]
Since Pauli operators are linearly independent, this is possible only if a random vector $Ux$ with $U$ sampled from $\pi(U)$ is distributed uniformly on the set of all non-zero vectors $\{0,1\}^{2n}{\setminus} 0^{2n}$.  This gives the Pauli mixing condition Eq.~(\ref{pauli_mixing}).

\end{document}